\documentclass[11pt]{article}
%%%%%%%%%% page setup %%%%%%%%%%
\textheight 9 in
\textwidth 6.5 in
\topmargin -.5 in
\oddsidemargin 0 in

\renewcommand{\baselinestretch}{1}

\usepackage{amsmath,amssymb,amsfonts,amsthm, bbm}
\usepackage{epic,eepic,epsfig,longtable}
\usepackage{multirow,verbatim}
\usepackage{array}
\usepackage{epsfig}
\usepackage{setspace}
\usepackage{color}
\usepackage{pb-diagram}
\usepackage{fancyhdr}
\usepackage{appendix}
\usepackage{listings}
\usepackage{longtable}
\usepackage{lineno,hyperref}
\usepackage{subfigure}
\usepackage{mathrsfs}
\usepackage{stmaryrd}
\usepackage{appendix}
\usepackage{algorithmic}
\usepackage{algorithm}

\usepackage{graphicx}
\usepackage{enumerate}
\usepackage{natbib}
\setcitestyle{authoryear,open={(},close={)}}
\usepackage{url} % not crucial - just used below for the URL

\usepackage{esvect}

\usepackage{xr}
\externaldocument{supplementary}

\providecommand{\norm}[1]{\left\lVert#1\right\rVert}

%%%%%%%%%%%%%%%%%%%%%%%%%%%%%%%%%%%%%%%%%%%%%%%%%%%%%%%%%%%%

\numberwithin{equation}{section}

\renewcommand{\hat}{\widehat}

\renewcommand{\hat}{\widehat}

\newcommand{\bfm}[1]{\ensuremath{\mathbf{#1}}}

   \def\bD{\bfm D}  
\def\be{\bfm e}

   \def\bI{\bfm I}

   \def\bP{\bfm P}  \def\PP{\mathbb{P}}
     
     \def\RR{\mathbb{R}}

\def\bu{\bfm u}   \def\bU{\bfm U}  
\def\bv{\bfm v}   \def\bV{\bfm V}  
     
\def\bx{\bfm x}   \def\bX{\bfm X}  
\def\by{\bfm y}

 \def\cA{{\cal  A}}
 \def\cB{{\cal  B}}
 \def\cC{{\cal  C}}

 \def\cG{{\cal  G}}

 \def\cL{{\cal  L}}

 \def\cO{{\cal  O}}

 \def\cS{{\cal  S}}
 \def\cT{{\cal  T}}
 \def\cU{{\cal  U}}
 \def\cV{{\cal  V}}
 \def\cW{{\cal  W}}

\def\bzero{\bfm 0}

%%%%%%%%%%%%%%%%%%%%%%%%% Boldfaces in Greek %%%%%%%%%%%%%%%%%%%%%%%%
\newcommand{\bfsym}[1]{\ensuremath{\boldsymbol{#1}}}

\def\bbeta{\bfsym \beta}
\def\bgamma{\bfsym \gamma}

          \def\bepsilon{\bfsym \varepsilon}
             \def\bSigma{\bfsym \Sigma}
        \def\bLambda {\bfsym {\Lambda}}

% May add more in future.

%%%%%%%%%%%%%%%%%%%% hat in greek  %%%%%%%%%%%%%%%%%%%%%%%%%%%%

%%%%%%%%%%%%%%%% Regular font in math equation  %%%%%%%

\DeclareMathOperator*{\argmin}{argmin}

\DeclareMathOperator{\cov}{cov}

\DeclareMathOperator{\diag}{diag}

\DeclareMathOperator{\E}{E}

\DeclareMathOperator{\supp}{supp}
\DeclareMathOperator{\Var}{Var}

\DeclareMathOperator{\tr}{tr}

\DeclareMathOperator{\sign}{\rm sign}

\def\newpage{\vfill\eject}

\DeclareMathOperator*{\topabs}{\cT_{\mathrm{abs}}}

\newcommand{\beq}{\begin{equation}}
\newcommand{\eeq}{\end{equation}}
\newcommand{\beqn}{\begin{eqnarray}}
\newcommand{\eeqn}{\end{eqnarray}}
\newcommand{\beqnn}{\begin{eqnarray*}}
	\newcommand{\eeqnn}{\end{eqnarray*}}

\def\lasso{{\rm LASSO}}

\def\R{{\mathbb R}}
\def\N{{\mathbb N}}

\def\E{{\mathbb E}}

\def\P{{\mathbb P}}
\def\diag{{\rm diag}}

\def\sign{{\rm sign}}
\def\cov{{\rm cov}}

\def\supp{{\rm supp}}

\def\argmin{{\rm argmin}}

%%%%%%%%%%%%%%%%%%%% more definitions %%%%%%%%%%%%%%%%%%%%

\numberwithin{equation}{section}
\newtheorem{thm}{Theorem}[section]

\newtheorem{lem}{Lemma}[section]
\newtheorem{cor}{Corollary}[section]
\newtheorem{prop}{Proposition}[section]
\newtheorem{ass}{Assumption}[section]

\newtheorem{rem}{Remark}[section]

\newcommand{\ltwonorm}[1]{\lVert#1\rVert_2}
\newcommand{\fnorm}[1]{\lVert#1\rVert_F}

\def \iht {\mathrm{iht}}
\def \pen {\mathrm{pen}}
\def \tpr {\mathrm{TPR}}
\def \fdr {\mathrm{FDR}}
\def \bss {\mathrm{best}}

\title{Best subset selection is robust against design dependence \footnote{The authors gratefully acknowledges ONR grant N00014-19-1-2120, NSF grant DMS-1662139, NSF grant DMS-2015366 and NIH grant R01-GM072611-16.}}

%\thanks{
%	The authors gratefully acknowledges
%	\textit{ONR grant N00014-19-1-2120, NSF grant DMS-1662139, NIH grant R01-GM072611-16 and DMS-2015366.}}
\author{Yongyi Guo$^{\ast, \diamond}$ \\ \small yongyig@princeton.edu \and Ziwei Zhu$^{\dagger, \diamond}$ \\ \small ziweiz@umich.edu \and Jianqing Fan$^\ast$ \\ \small jqfan@princeton.edu\\
\normalsize
$^\ast$Department of Operations Research and Financial Engineering, Princeton University \\ \normalsize
$^\dagger$Department of Statistics, University of Michigan, Ann Arbor \\ \normalsize
$^\diamond$ The authors contributed equally to this work. }

\begin{document}

\def\spacingset#1{\renewcommand{\baselinestretch}%
{#1}\small\normalsize} \spacingset{1}

%%%%%%%%%%%%%%%%%%%%%%%%%%%%%%%%%%%%%%%%%%%%%%%%%%%%%%%%%%%%%%%%%%%%%%%%%%%%%%

\maketitle

\begin{abstract}
Best subset selection (BSS) is widely known as the holy grail for high-dimensional variable selection. Nevertheless, the notorious NP-hardness of BSS substantially restricts its practical application and also discourages its theoretical development to some extent, particularly in the current era of big data. % Despite its intensive studies, the fundamental question of when BSS is truly the ``best'', namely yielding the oracle estimator that uses the true subset of variables, remains only partially answered. 
In this paper, we investigate the variable selection properties of BSS when its target sparsity is greater than or equal to the true sparsity. Our main message is that BSS is robust against design dependence in terms of achieving model consistency and sure screening, and more importantly, that such robustness can be propagated to the near best subsets that are computationally tangible. Specifically, we introduce an identifiability margin condition that is free of restricted eigenvalues 
% which is natural for finite-sample model identifiability, 
and show that it is sufficient and nearly necessary for BSS to exactly recover the true model. A relaxed version of this condition is also sufficient for BSS to achieve the sure screening property. Moreover, taking optimization error into account, we find that all the established statistical properties for the exact best subset carry over to any near best subset whose residual sum of squares is close enough to that of the best one. In particular, a two-stage fully corrective iterative hard thresholding (IHT) algorithm can provably find a sparse sure screening subset within logarithmic steps; another round of exact BSS within this set can recover the true model. The simulation studies and real data examples show that IHT yields lower false discovery rates and higher true positive rates than the competing approaches including LASSO, SCAD and Sure Independence Screening (SIS), especially under highly correlated design.
\end{abstract}

\noindent \textbf{Keywords: }{Identifiability Margin, Iterative Hard Thresholding, High-Dimensional Variable Selection, Model Consistency, True Positive Rate, False Discovery Rate, Sure Screening}

%\newpage
%\spacingset{1.5} % DON'T change the spacing!
\section{Introduction}
\label{sec:intro}

Variable selection in high-dimensional sparse regression has been one of the most central topics in statistics for decades. Consider $n$ independent and identically distributed (i.i.d.) observations $\{\bx_i, y_i\}_{i=1}^n$ from a linear model:
\beq
\label{eq:lm}
y_i=\bx_i^\top\bbeta^*+\epsilon_i, \quad i\in[n],
\eeq
where $\bx_i$ is a $p$-dimensional design vector, $\epsilon_i$ is random noise that is independent of $\bx_i$ and has sub-Gaussian norm $\lVert\epsilon_i\rVert_{\psi_2}$ bounded by $\sigma$, $\bbeta^*\in\R^p$ and $\norm{\bbeta^*}_0 = s<n$. The major goal of high-dimensional variable selection is to learn the active set of the true regression coefficients, namely $\cS^* := \{j: \beta^*_j \neq 0\}$, when $p$ enormously exceeds $n$.

One well-established principle for high-dimensional variable selection is to penalize empirical risk by model complexity, thereby encouraging sparse solutions. Specifically,  consider
\begin{equation}\label{eq-Mest}
\hat\bbeta^{\mathrm{pen}} :=\argmin_{\bbeta \in \R^p} \cL(\bbeta) + \rho_{\lambda}(\bbeta),
\end{equation}
%\frac{1}{2n}\sum_{i=1}^n (\bx_i^\top\bbeta-y_i)^2+\rho_\lambda(\bbeta) =:
where $\cL(\bbeta)$ is a loss function, and where $\rho_{\lambda}(\bbeta)$ is a model regularizer.
%Note that the regularizer here is decomposible with respect to the components of $\bbeta$, which is a useful property for its statistical analysis \citep[][]{NRW12}.
Classical approaches such as AIC \citep{Aka74, Aka98}, BIC \citep{Sch78} and Mallow's $C_p$ \citep{Mal73} use the model size, i.e., the $L_0$-norm of the regression coefficients, to penalize negative log-likelihood. Though rendering nice sampling properties \citep{BBM99,ZZh12}, such $L_0$-regularized methods are notorious for its computational infeasibility; in general the program has been shown to be NP-hard \citep{FKT15}. The past three decades or so have witnessed massive endeavors on pursuing alternative penalty functions that yield both strong statistical guarantee and computational expediency in the high-dimensional regime. Such efforts have given rise to a myriad of pivotal and powerful methods for variable selection, such as LASSO \citep{Rob96, CDS98, ZY06}, SCAD \citep{FLi01, FPe04, LWa15, LWa17, FLS18}, adaptive LASSO \citep{Zou06}, elastic net \citep{ZHa05}, MCP \citep{Zha10}, among others. We also refer the readers to \cite{BVa11}, \cite{Wai19} and \cite{FLZ20} for comprehensive introduction to recent development in high-dimensional variable selection.

Theoretically, there has been intensive study on when these penalized methods enjoy model consistency, i.e.,  recovering the true model with probability converging to one as $n, p \rightarrow \infty$. Write $\by = (y_1, \ldots, y_n) ^ {\top}$ and $\bX = (\bx_1, \ldots, \bx_n) ^ {\top}$. \cite{ZY06} established the sufficient and nearly necessary conditions for model consistency of the LASSO estimator $\hat\bbeta^{\lasso}$. One of the most crucial conditions involved is the well-known irrepresentable condition, which says that there exists a constant $\eta > 0$ such that
\begin{equation}
\label{eq:irrepresentable}
\left\lVert \hat\bSigma_{(\cS^*)^c\cS^*} (\hat\bSigma_{\cS^*\cS^*})^{-1}\sign(\bbeta^*_{\cS^*})\right\rVert_\infty\leq 1-\eta,
\end{equation}
where $\hat\bSigma_{\cS^*\cS^*}$ is the sample covariance of $\bX_{\cS^*}$ and $\hat\bSigma_{(\cS^*)^c\cS^*}$ is the sample cross covariance between $\bX_{(\cS^*)^c}$ and $\bX_{\cS^*}$. Informally speaking, if we regress any spurious covariate on the true covariates, \eqref{eq:irrepresentable} requires the $\ell_1$-norm of the resulting regression coefficient vector to be bounded by $1 - \eta$, which is generally believed being restrictive in practice:  the bigger the true model, the harder the condition to satisfy.

Nonconvex regularization comes as a remedy for this. It corrects the bias induced by $\ell_1$-regularization, thereby being able to achieve selection consistency without the irrepresentable condition \citep{fan2011nonconcave}. Let $\mu_*:= \min_{j \in \cS^*}|\beta^*_j|$. \cite{Zha10} shows that when $\mu_* \gtrsim \sqrt{\log p / n}$, MCP enjoys selection consistency under a sparse Riesz condition on $\bX$, i.e.,
\[
	0< c_{\ast} \le \min_{|\cA| \le m} \lambda_{\min}(\bSigma_{\cA\cA}) \le \max_{|\cA| \le m} \lambda_{\max}(\bSigma_{\cA\cA}) \le c^{\ast} < \infty,
\]
where $\bSigma_{\cA\cA}$ is the population covariance of $\bX_{\cA}$, and where $m \gtrsim s$. \cite{FLS18} propose an iterative local adaptive majorize-minimization (I-LAMM) algorithm for empirical risk minimization with folded concave penalty. Under a general likelihood framework, they show that only a local Riesz condition suffices to ensure model consistency. Specifically, for any sparsity $m \in [p]$ and neighborhood radius $r$, define the maximum and minimum localized sparse eigenvalues (LSE) of $\nabla^2 \cL$ around $\bbeta^*$ as follows:
\begin{equation}
\label{eq:lse}
\begin{aligned}
\rho_+(m, r):=\sup_{\bu, \bbeta}\left\{\bu_J^\top\nabla^2\cL(\bbeta)\bu_J:\lVert \bu_J\rVert_2^2=1,\lvert J\rvert\leq m, \lVert\bbeta-\bbeta^*\rVert_2\leq r\right\},\\
\rho_-(m, r):=\inf_{\bu, \bbeta}\left\{\bu_J^\top\nabla^2\cL(\bbeta)\bu_J:\lVert \bu_J\rVert_2^2=1,\lvert J\rvert\leq m, \lVert\bbeta-\bbeta^*\rVert_2\leq r\right\}.
\end{aligned}
\end{equation}
I-LAMM is proved to enjoy model consistency if $\rho_+$ and $\rho_-$ are bounded from above and below respectively with $m \asymp s$, $r \asymp \sqrt{s\log p / n}$ and $\mu_{\ast} \allowbreak \gtrsim \sqrt{\log p / n}$. Nevertheless, sufficient and nearly necessary conditions for nonconvex penalized methods to achieve model consistency have yet been found.

%For notational convenience, let $\by := (y_1, \ldots, y_n)^\top$, $\bX := (\bx_1, \ldots, \bx_n)^\top$ and $\cS^*:=\supp(\bbeta^*)$.

Recent advancement in algorithms and hardware has sparked a revival of interest in the best subset selection (BSS) despite its computational hardness. \cite{BKM16} propose and study a Mixed Integer Optimization (MIO) approach for solving the classical BSS problem, i.e.,
\begin{equation}
\label{eq:best_subset}
\hat \bbeta^{\bss} (\hat s):= \argmin_{\bbeta \in \R^p, \|\bbeta\|_0 \le \hat s} \cL(\bbeta),
\end{equation}
where $\hat s$ is an estimator of the sparsity. In the sequel, for conciseness we drop $\hat s$ when we write $\hat \bbeta^{\bss}(\hat s)$. They show that the MIO algorithm can find a near-optimal solution of \eqref{eq:best_subset} within minutes when $n$ is in the $100$s and $p$ is in the $1000$s. Their simulations also suggest that when a spurious predictor is highly correlated with a true predictor in the high-dimensional setup, LASSO tends to select a dense model and thus yields much worse prediction performance than the MIO (see Fig. 8 therein). A recent follow-up work \cite{bertsimas2020sparse} proposed a new cutting plane method that solves the BSS problem with Ridge penalty with $n, p$ in the $100,000$s. \cite{HTT17} expand the simulation experiments of \cite{BKM16} and show that in terms of the prediction risk, BSS performs better than LASSO when the signal-to-noise ratio (SNR) is high, while performing worse than LASSO when the SNR is low. These works motivate us to systematically investigate the variable selection properties of BSS and compare them with those of LASSO and SCAD. Unlike Lasso, there is no sufficient and nearly necessary condition for BSS to achieve model selection consistency.
\cite{SPZ12} and \cite{SPZ13} are among the earlier papers on the variable selection properties of BSS. They establish the optimality of BSS in terms of variable selection, in the sense that it achieves model consistency under a ``minimal separation condition''.
% that is proved to be necessary for model consistency. Specifically, \cite{SPZ13} define the following degree of separation to characterize the difficulty of high-dimensional variable selection:
%\[
%C_{\min}(\bbeta ^ *, \bX) := \min_{\substack{|\cS| \le s, \bbeta \in \RR^p, \\ \supp(\bbeta) = \cS}} \frac{1}{n\max(|\cS^*\backslash \cS|, 1)} \ltwonorm{\bX\bbeta ^ * - \bX\bbeta} ^ 2,
%\]
%where $\bX = (\bx_1, \bx_2, \ldots, \bx_n) ^ \top$. 
%They show that the selection consistency requires that $C_{\min}(\bbeta ^ *, \bX) \gtrsim \sigma ^ 2 \log p /n$, where $\sigma  := \sqrt{\var(\epsilon_1)}$, and that $\hat\bbeta^{\bss}(s)$ and 
They show further that its computational surrogate based on truncated $\ell_1$ penalty (TLP) consistently  recovers $\cS^*$.
%when $C_{\min}(\bbeta ^ *, \bX) \gtrsim \sigma ^ 2 \log p /n$.

In this paper, we focus on the model selection properties of BSS and a two-stage fully corrective iterative hard thresholding (IHT) algorithm that provably solves the BSS problem with relaxed sparsity constraint \citep{JTK14}. More specifically, this IHT algorithm can find a solution $\hat\bbeta^{\iht}$ with sparisity slightly larger than $\hat s$, such that $\cL(\hat\bbeta^{\iht})$ is below $\cL(\hat\bbeta^{\bss}(\hat s))$, which is the minimum of the objective function in the best $\hat s$-subset selection problem. Based on this optimization result, we establish the model selection properties of $\hat\bbeta^{\iht}$. In the analysis, we need to take into account both statistical and optimization error in an non-asymptotic manner, which distinguishes our work from \cite{SPZ12} and \cite{SPZ13}. Given an estimator $\widehat\bbeta$, define its true positive rate (TPR) as
\[
	\tpr(\widehat\bbeta) := \frac{|\supp(\widehat\bbeta) \cap \cS^*|}{|\cS^*|},
\]
and define its false discovery rate (FDR) as
\[
	\fdr(\widehat \bbeta) := \frac{|\supp(\widehat\bbeta) \cap(\cS^*)^{c}|}{\max(|\supp(\widehat\bbeta)|, 1)}.
\]
Our major contributions are threefold:
\begin{enumerate}
		\item We identify a crucial quantity, called
        \emph{the identifiability margin}, that determines whether $\widehat\bbeta^{\bss}$ or its \emph{approximate solution} achieves exact model recovery. This quantity is independent of the restricted eigenvalues of the design, thereby accommodating highly dependent design. The sufficient and necessary condition for BSS to achieve model consistency boils down to a lower bound of the identifiability margin, which is weaker and more natural than the $\beta$-min condition in \cite{ZZh12}, and which is weaker than the LSE condition of \citet{FLS18}. See Theorems \ref{thm-selection-consistency} and \ref{thm:lower_bound}.
        
        %Specifically, for any $\cS\subset\{1,\cdots, p\}$ with $|\cS|=s$,
		%\beq
		%	\label{eq:schur_complement}
		%	\hat\bD(\cS):=\hat\bSigma_{\cS^*\setminus\cS, \cS^*\setminus\cS}-\hat\bSigma_{\cS^*\setminus\cS, \cS}\hat\bSigma_{\cS \cS}^{-1}\hat\bSigma_{\cS, \cS^*\setminus\cS}.
		%\eeq
		%$\hat \bD(\cS)$ can be regarded as the empricial conditional covariance of $\bx_{\cS^* \backslash \cS}$ given $\bx_{\cS}$. Define $\hat\lambda_m := \min_{|\cS| = s, \cS \neq \cS ^ *} \lambda_{\min}(\hat\bD(\cS))$. Theorems \ref{thm-selection-consistency} and \ref{thm:lower_bound} show that the lower $\hat\lambda_{m}$, the harder for BSS to identify the true model $\cS ^ *$ (see Remark \ref{rem:lambda_m} for the details).
		
		\item We explicitly characterize $\tpr(\hat\bbeta^{\bss})$ when the sparsity is overestimated (see Theorem \ref{thm:power} for the details). The identifiability margin also plays a critical role in guaranteeing the sure screening property of any reasonable approximate solution to the BSS.  In particular, we show that the more we overestimate $s$, the stronger signal is required to guarantee sure screening of $\hat\bbeta^{\bss}$, i.e., $\tpr(\hat\bbeta^{\bss}) = 1$.
		
		\item We study a two-stage fully corrective IHT algorithm and provide a TPR guarantee of its solution $\hat\bbeta^{\iht}$. If the true sparsity $s$ is known, a further application of BSS on the support of $\hat\bbeta^{\iht}$ can yield exactly the true model. Our simulations demonstrate that $\hat\bbeta^{\iht}$ exhibits remarkably higher TPR than LASSO and SCAD at the same level of FDR, especially in presence of strong correlation of design.
	\end{enumerate}

	The rest of the paper is organized as follows. Section \ref{sec:bss} analyzes the model selection properties of BSS when the sparsity is either known or overestimated. Section \ref{sec:iht} introduces the IHT algorithm and establishes the TPR guarantee of its solution. Section \ref{sec:sim} compares the TPR-FDR curve of IHT with those of LASSO, SCAD and SIS under different signal-to-noise ratios and correlation structure of the design. Finally, Section \ref{sec:real} analyzes two real datasets on diabetes and macroeconomics respectively to illustrate the power of the IHT algorithm in model selection.

\section{Model selection properties of BSS}\label{sec:bss}

\subsection{Model consistency of BSS with known sparsity}

%The finite-sample model identifiability margin can be measured as follows.  
Let $\by = (y_1, \ldots, y_n) ^ {\top}$, $\bX = (\bx_1, \ldots, \bx_n) ^ {\top}$ and $\bepsilon = (\epsilon_1, \ldots, \epsilon_n) ^ {\top}$. Let $\bX_{\cS}$ denote the matrix comprised of only the columns of $\bX$ with indices in $\cS$, and let $\bP_{\bX_{\cS}} := \bX_{\cS}(\bX^\top_{\cS} \bX_{\cS})^{-1} \bX_{\cS}^\top$ denote the projection matrix corresponding to the column space of $\bX_{\cS}$. Given any candidate model $\cS \subset [p]$ other than $\cS ^ *$, we can rewrite model \eqref{eq:lm} in the matrix form: 
$$
\by = \bX_{\cS^*} \bbeta_{\cS^*}^* + \bepsilon
= \bP_{\bX_{\cS}} \bX_{\cS^*} \bbeta_{\cS^*}^* + (\bI_n - \bP_{\bX_{\cS}})\bX_{\cS^*} \bbeta_{\cS^*}^* + \bepsilon. 
$$
The term $(\bI_n - \bP_{\bX_{\cS}})\bX_{\cS^*} \bbeta_{\cS^*}^*$ is the part of the signal that cannot be linearly explained by $\bX_{\cS}$. We can thus measure the discrimination margin between models $\cS$ and $\cS^*$ through
$$
\ltwonorm{(\bI_n - \bP_{\bX_{\cS}})\bX_{\cS^*} \bbeta_{\cS^*}^*} ^ 2 = \bbeta_{\cS ^ *}^{\top} \bX_{\cS ^ *} ^ {\top} (\bI_n - \bP_{\bX_{\cS}})\bX_{\cS ^ *}\bbeta ^*_{\cS ^ *} = \bbeta_{\cS_0}^{*\top} \bX_{\cS_0} ^ {\top} (\bI_n - \bP_{\bX_{\cS}})\bX_{\cS_0}\bbeta ^*_{\cS_0},
$$
where $\cS_0 := \cS ^ * \setminus \cS$. Let $\hat\bSigma := n^{-1}\bX^\top\bX$ be the sample covariance matrix, and for any two sets $\cS_1, \cS_2\subset\{1,2,\cdots, p\}$, let $\hat\bSigma_{\cS_1,\cS_2}$ be the submatrix of $\hat\bSigma$ containing the intersection of the rows indexed in $\cS_1$ and columns indexed in $\cS_2$. Note that if we define 
\beq
\label{eq:schur_complement}
\hat\bD(\cS):=\hat\bSigma_{\cS_0, \cS_0}-\hat\bSigma_{\cS_0, \cS}\hat\bSigma_{\cS \cS}^{-1}\hat\bSigma_{\cS, \cS_0}, 
\eeq
which is the covariance of the residuals of $\bX_{\cS_0}$ after being linearly regressed on $\bX_{\cS}$, then we have that $\ltwonorm{(\bI_n - \bP_{\bX_{\cS}})\bX_{\cS^*} \bbeta_{\cS^*}^* } ^ 2 = n\bbeta^{*\top}_{\cS_0} \hat \bD(\cS) \bbeta_{\cS_0} ^*$. Intuitively, $\cS ^ *$ is identifiable only when $\bbeta^{*\top}_{\cS_0} \hat \bD(\cS) \bbeta_{\cS_0} ^*$ is distinctively large for all $\cS \neq \cS ^ *$. This leads to the definition of the \emph{identifiability margin} in Theorem \ref{thm-selection-consistency}, the crucial quantity that determines whether BSS can achieve model consistency. 

%in comparison with $|\cS_0^*| \sigma^2$ (recalling $F$-test with infinite degree of freedom on the denominator), where $\cS_0  = \cS^* \setminus \cS$ 
%This should be compared for all $\cS$ under consideration and leads to the identifiability margin defined in the theorem below.  reflecting the part of $\bX_{\cS_0}$ that can not be explained by $\bX_{\cS}$.  
%Hence, $\bbeta_{\cS_0} ^{*\top} \hat \bD(\cS) \bbeta_{\cS_0}^*$ measures the part $\bX_{\cS_0}^\top \bbeta_{\cS_0}^*$ that can not be linearly explained by $\bX_{\cS}$.

Now we are in position to present our first theoretical result.  For any set $\cS\subseteq [p]$, define the sum of squared residuals $R_{\cS}$ of $\by$ on $\bX_{\cS}$ as
\[
R_{\cS} := \by^\top\bigl\{\bI - \bX_{\cS}(\bX^\top_{\cS} \bX_{\cS})^{-1} \bX_{\cS}^\top\bigr\}\by = \by^\top(\bI-\bP_{\bX_{\cS}})\by. 
\] 
In addition, for any sparsity estimate $\hat s$, define 
\[
	\cA(\hat s):=\{\cS\subset [p]: |\cS|=\hat s, \cS\neq \cS^*\}, 
\]
which represents the set of all models of size $\hat s$ except the true one. 
The following theorem gives a sufficient condition for BSS to recover exactly the true model for fixed designs. 
%Moreover, for any $\cS\subset\{1,\cdots, p\}$ such that $|\cS|=s$, define
%$$
%\hat\bD(\cS):=\hat\bSigma_{\cS^*\setminus\cS, \cS^*\setminus\cS}-\hat\bSigma_{\cS^*\setminus\cS, \cS}\hat\bSigma_{\cS \cS}^{-1}\hat\bSigma_{\cS, \cS^*\setminus\cS}.
%$$

\begin{thm}\label{thm-selection-consistency}
For any $p \geq 3$ and sparisty estimate $\hat s$, define the identifiability margin 
\begin{equation}
	\label{eq:im1}
	\tau_*(\hat s) := \min_{\cS \in \cA(\hat s)} \frac{\bbeta_{\cS^* \setminus \cS} ^{*\top} \hat \bD(\cS) \bbeta_{\cS^* \setminus \cS}^*}{|\cS \setminus \cS ^ *|}. 
\end{equation}
Then there exists a universal constant $C > 1$, such that for any $\xi > C$ and $0\le  \eta < 1$, whenever
\beq
	\label{eq:thm2.1_tau_condition}
	\tau_*(s) \ge \biggl(\frac{4\xi}{1 - \eta}\biggr) ^ {2}\frac{\sigma ^ 2\log p}{n},
\eeq
%\beq
%	\label{eq:min_marginal_signal}
%    \mu_* \geq \frac{4\xi \sigma}{1 - \eta}\biggl(\frac{\log p}{n\hat \lambda_m}\biggr)^{1 / 2},
%\eeq
we have with probability at least $1 - 8sp^{- (C^{-1}\xi - 1)}$ that
\beq
	\Bigl\{\widehat \cS: %\widehat \cS \subset [p],
 |\widehat \cS| =s, R_{\widehat \cS} \le \min_{\cS \subset [p], |\cS| =s} R_{\cS} + n\eta\tau_*(s) \Bigr\} = \{\cS^*\},
\eeq
which, in particular, implies that $\cS^* = \argmin_{\cS\subset[p], |\cS|=s} R_{\cS} $.
\end{thm}

Theorem~\ref{rem:lambda_m} asserts that if the identifiability margin of the true model satisfies \eqref{eq:thm2.1_tau_condition}, any estimator $\hat \cS$ with optimization error within $n\eta\tau_*(s)$ selects the true model.

%\begin{rem}
%	Consider the population counterpart $\bD(\cS)$ of $\widehat \bD(\cS)$:
%	\[
%	\bD(\cS) := \bSigma_{\cS^*\setminus\cS, \cS^*\setminus\cS} - \bSigma_{\cS^*\setminus\cS, \cS}\bSigma_{\cS \cS}^{-1}\bSigma_{\cS, \cS^*\setminus\cS}.
%	\]
%	Note that when $\bx_1$ follows a multivariate normal distribution, $\bD_\cS$ is the conditional covariance matrix of $[\bx_1]_{\cS^* \setminus \cS}$ given $[\bx_1]_{\cS}$, i.e.,
%	\beq
%	\label{eq:conditional_cov}
%	\bD(\cS) = \Cov\bigl([\bx_1]_{\cS^* \setminus \cS}~|~[\bx_1]_{\cS}\bigr).
%	\eeq
	%		Therefore, $\widehat \lambda_m = 0$ corresponds to the case where one can find a $\cS \neq \cS^*$ such that $\bx_{\cS \cup \cS^*}$ is degenerate, which we do not expect to occur in practice.
%\end{rem}

\begin{rem}
	\label{rem:lambda_m}
	Condition \eqref{eq:thm2.1_tau_condition} is more natural and weaker than the $\beta$-min condition in \cite{ZZh12}. Let $\mu_* := \min_{j \in [p]} |\beta^*_j|$ and $\hat \lambda_m:= \min_{\cS \in \cA(s)} \lambda_{\min}(\hat \bD(\cS))$. Then, $\bbeta_{\cS^* \setminus \cS} ^{*\top} \allowbreak \hat \bD(\cS) \bbeta_{\cS^* \setminus \cS}^*
	%\ge \lambda_{\min}(\hat \bD(\cS)) \ltwonorm{\bbeta^*_{\cS \backslash \cS ^ *}} ^ 2
	\ge \hat \lambda_m |\cS^* \backslash \cS|\mu_* ^ 2$ and hence $\tau_*(s) \geq \lambda_{\min}(\hat \bD(\cS)) \mu_* ^ 2$. Therefore, a sufficient condition for \eqref{eq:thm2.1_tau_condition} is that
	\beq
		\label{eq:beta_min}
		\mu_* \ge \frac{4\xi\sigma}{1 - \eta} \biggl(\frac{\log p}{n\hat\lambda_m}\biggr)  ^ {1 / 2}.
	\eeq
	 \citet{ZZh12} showed that the $\ell_0$-regularized least squares estimator is able to achieve model consistency when $\mu_* \gtrsim \sigma\sqrt{\log p / (n\kappa_-)}$, where $\kappa_-:= \min_{\cA: |\cA| \le s, \cA \subset [p]} \allowbreak\lambda_{\min}(\bSigma_{\cA\cA})$. Therefore, the major difference between this condition and \eqref{eq:beta_min} lies in the difference between $\kappa_-$ and $\hat \lambda_m$.  Note that $\hat \lambda_{m}$ is insensitive to the collinearity between spurious covariates themselves; rather, it reflects how spurious variables can approximate the true model, which implies much less restriction than that induced by $\kappa_-$. To further illustrate this point, consider $100$ standard Gaussian covariates $\{X_j\}_{j \in [100]}$. Suppose that $ \cS ^ * = \{1, 2\}$, i.e., the true model has only two covariates $X_1$ and $X_2$. Let $\cov(X_j, X_k) = r > 0$ for any $j, k \ge 3$ such that $j \neq k$, and let $\cov(X_1, X_j) = \cov(X_2, X_j) = 0$ for any $j \ge 3$. In words, all the spurious covariates are correlated with each other but independent of the true covariates. One can then verify that $\hat\lambda_m = 1$, but $\hat\kappa_- = 1 - r$.
%	 \jf{be careful}. 
	 This implies that as $r \to $ 1, the $\beta$-min condition of \citet{ZZh12} requires higher and higher signal strength, whereas our condition \eqref{eq:beta_min} does not at all! Therefore, \eqref{eq:beta_min} shows the robustness of BSS against dependence between spurious variables.
	
\end{rem}

\begin{rem}
		Condition \eqref{eq:beta_min} is also weaker than the LSE condition of \citet{FLS18}. \eqref{eq:beta_min} allows $\widehat \lambda_m$ to decrease to $0$ as $n$ and $p$ grow; this scenario, however, implies that $\rho_-(2s, r)$ in \eqref{eq:lse} converges to $0$ uniformly over $r \in \mathbb{R}$ and thus contradicts the LSE condition in \cite{FLS18}. To see this, write $\cS_0  = \cS^* \setminus \cS$.  Since  $\widehat \lambda_m \rightarrow 0$, for any $\epsilon > 0$,  there exist $\cS \subset [p]$ with $|\cS| \le s$ and $\bv \in \mathbb{R}^{|\cS_0|}$ such that
		 \[
		 \bv^\top \widehat \bSigma_{\cS_0, \cS_0} \bv - \bv^\top \widehat\bSigma_{\cS_0, \cS}\widehat \bSigma^{-1}_{\cS, \cS}\widehat\bSigma_{\cS, \cS_0} \bv \le \epsilon \|\bv\|_2^2.
		 \]
		 Construct $\widetilde \bv = (\bv^\top, - \bv^\top \widehat \bSigma_{\cS_0, \cS} \widehat \bSigma_{\cS, \cS}^{-1})^\top \in \RR^{|\cS_0 \cup \cS|}$. Then the inequality above yields that for any $\bbeta \in \RR^p$,
		 \[
		 \widetilde \bv^\top \nabla^2 \cL(\bbeta)\widetilde \bv = \widetilde \bv^\top \widehat \bSigma_{\cS_0 \cup \cS, \cS_0 \cup \cS} \widetilde \bv = \bv^\top (\widehat \bSigma_{\cS_0, \cS_0} - \widehat \bSigma_{\cS_0, \cS} \widehat \bSigma^{-1}_{\cS, \cS} \widehat \bSigma_{\cS, \cS_0}) \bv \le \epsilon \ltwonorm{\widetilde \bv}^2.
		 \]
		 Therefore, $\rho(2s, r) \le \epsilon$ for all $r > 0$ and our claim follows by arbitrariness of $\epsilon$. This example illustrates the capability of the identifiability margin to leverage the signal strength to overcome collinearity of the design.
	\end{rem}

\begin{rem}
	Finally, we discuss the relationship between the condition \eqref{eq:thm2.1_tau_condition} and the irrepresentable condition in \cite{ZY06}. Though BSS outperfoms LASSO in terms of model selection in general as illustrated in our numerical study, one cannot deduce \eqref{eq:thm2.1_tau_condition} from the irrepresentable condition. In other words, there are some \emph{corner cases} where LASSO can recover the true model, while BSS cannot. For example, suppose there are three four-dimensional observations: $\bX = [(1 + \eta ^ 2)^{-1 / 2}(\be_1 + \eta \be_3), (1 + \eta ^ 2 )^{-1 / 2}(\be_1 - \eta\be_3), 2^{-1 / 2}(\be_1 + \be_2), \be_2] \in \RR^{3 \times 4}$, where $\eta < 1$ and $\be_j$ is the $j$th canonical basis vector. The true model is that $Y = (1 + \eta ^ 2)^{1 / 2}(X_1 + X_2) / 2$, which implies that $\cS^* = \{1, 2\}$. Some algebra yields that
	\[
	\left\lVert \hat\bSigma_{(\cS^*)^c\cS^*} (\hat\bSigma_{\cS^*\cS^*})^{-1}\sign(\bbeta^*_{\cS^*})\right\rVert_\infty = \biggl(\frac{1 + \eta ^ 2}{2}\biggr)^{1 / 2} < 1.
	\]
	Therefore, the irrepresentable condition is satisfied, and LASSO is able to recover the true model. In contrast, BSS cannot recover $\cS ^ *$, because $\bX_1 + \bX_2$ is parallel to $\bX_3 - 2^{-1 / 2}\bX_4$, and thus $\lambda_{\min}(\bD(\{3, 4\})) = 0$. The root reason for BSS's failure to capture the true model is that the $\ell_0$ constraint does not have any preference between the models $\{1, 2\}$ and $\{3, 4\}$, while LASSO prefers $\{1, 2\}$ because the resulting regression coefficients have smaller $\ell_1$-norm. Of course, if the true model is $\{3, 4\}$, LASSO will choose the wrong model.
	%	\[
	%		\bX = \left(
	%			\begin{array}{ccc}
	%				
	%			\end{array}
	%			\right).
	%	\]
\end{rem}

Theorem \ref{thm-selection-consistency} shows that the identifiability margin determines the model consistency of BSS. A natural question then arises: is the requirement \eqref{eq:thm2.1_tau_condition} on the identifiability margin necessary for such model consistency? The following theorem shows that it is almost necessary by giving a necessary condition that takes a similar form as \eqref{eq:thm2.1_tau_condition}. For any $\cB\subset \R^n$ and $\delta > 0$, let $M(\delta, \cB)$ denote the $\delta$-packing number of $\cB$ under Euclidean distance. We first introduce a technical assumption that excludes extremely correlated setups.

\begin{ass}\label{ass:lower_bound}
	There exist $j_0 \in \cS^*$, a universal constant $0 < \delta_0 < 1$ and $c_{\delta_0} > 0$ such that if we let $\cS_0^* := \cS^* \setminus j_0$, $\widetilde \bu_j:=(\bI-\bP_{\bX_{\cS_0^*}})\bX_j$ and $\overline \bu_j := \widetilde \bu_j / \ltwonorm{\widetilde \bu_j}$ for $j \in [p] \setminus \cS^*$,  then $\log M(\delta_0, \{\overline \bu_j\}_{j \in [p] \setminus \cS^*})\geq c_{\delta_0} \log p$.
\end{ass}

Basically, Assumption \ref{ass:lower_bound} says that there are $\Omega(p ^ {c_{\delta_0}})$ spurious variables that are not too correlated with each other. Violating this assumption means that all the spurious variables are highly correlated with each other, in which case the lower bound  \eqref{eq:thm2.1_tau_condition} for the identifiability margin is not necessary to find the true model. Now we are in position to introduce the necessary condition.

\begin{thm}
	\label{thm:lower_bound}
	Suppose Assumption \ref{ass:lower_bound} holds. Furthermore, assume $\{\epsilon_i\}_{i \in [n]}$ are i.i.d. Gaussian noise with variance $\sigma ^ 2$. Consider all the models of size $s$ that are formed by replacing $j_0$ in $\cS ^ *$ with a spurious variable, namely, $\cC_{j_0} := \{\cS\subset [p]: |\cS|=s, \cS^*\setminus \cS =\{j_0\}\} \subset \cA(s)$. Define the maximum leave-one-out identifiability margin as
	\beq
		\label{eq:tau_max}
		\tau^* := \max_{\cS \in \cC_{j_0}} \frac{\bbeta_{\cS^* \setminus \cS} ^{*\top} \hat \bD(\cS) \bbeta_{\cS^* \setminus \cS}^*}{|\cS \setminus \cS ^ *|}  = \max_{\cS \in \cC_{j_0}} \widehat D(\cS){\beta^{*}_{j_0}}^2.
	\eeq
	Then there exist $c, C_1 > 0$, depending on $\delta_0$ in Assumption \ref{ass:lower_bound}, such that whenever  $\tau^*<c\sigma^2\log p / n$, with probability at least $1-C_1(\log p) ^ {-1} -2p^{-1}$, $\cS^* \notin \argmin_{\cS\subset[p], |\cS|=s} R_{\cS}$.
\end{thm}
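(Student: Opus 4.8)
The plan is to exhibit, on an event of the claimed probability, a size-$s$ subset with residual sum of squares strictly below $R_{\cS^*}$, which forces $\cS^*\notin\argmin_{|\cS|=s}R_{\cS}$. Write $\cS_0^*:=\cS^*\setminus\{j_0\}$; every $\cS\in\cC_{j_0}$ equals $\cS_k:=\cS_0^*\cup\{k\}$ for some $k\in[p]\setminus\cS^*$, so it is enough to find one such $k$ with $R_{\cS_k}<R_{\cS^*}$. Let $\bM_0:=\bI-\bP_{\bX_{\cS_0^*}}$, $\widetilde\bu_j:=\bM_0\bX_j$, $\overline\bu_j:=\widetilde\bu_j/\ltwonorm{\widetilde\bu_j}$ (the vectors of Assumption~\ref{ass:lower_bound}), and $g_j:=\overline\bu_j^\top\bepsilon\sim\cN(0,\sigma^2)$. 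Gram--Schmidt orthogonalization yields $\bI-\bP_{\bX_{\cS_k}}=\bM_0-\overline\bu_k\overline\bu_k^\top$ and $\bI-\bP_{\bX_{\cS^*}}=\bM_0-\overline\bu_{j_0}\overline\bu_{j_0}^\top$, and since $\overline\bu_j\perp\bX_{\cS_0^*}$ and $\by=\bX_{\cS^*}\bbeta^*_{\cS^*}+\bepsilon$,
\[
R_{\cS_k}-R_{\cS^*}=(\overline\bu_{j_0}^\top\by)^2-(\overline\bu_k^\top\by)^2=(m+g_{j_0})^2-(m\rho_k+g_k)^2,
\]
with $m:=\beta^*_{j_0}\ltwonorm{\widetilde\bu_{j_0}}$ and $\rho_k:=\langle\overline\bu_k,\overline\bu_{j_0}\rangle$. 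By the $\bepsilon\mapsto-\bepsilon$ symmetry of the noise we may assume $m>0$; writing $\nu:=m^2$, it remains to produce $k$ with $(\sqrt\nu\,\rho_k+g_k)^2>(\sqrt\nu+g_{j_0})^2$.

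First I would use the hypothesis to bound $\nu$. A Schur-complement computation gives $\widehat D(\cS_k)=n^{-1}\ltwonorm{\widetilde\bu_{j_0}}^2(1-\rho_k^2)$, hence $\tau^*=(\nu/n)\max_{k\notin\cS^*}(1-\rho_k^2)$. Let $\cK\subseteq[p]\setminus\cS^*$ index a $\delta_0$-separated subfamily of $\{\overline\bu_j\}$ with $|\cK|\ge p^{c_{\delta_0}}$ (Assumption~\ref{ass:lower_bound}). If $\ltwonorm{\overline\bu_k-\overline\bu_{j_0}}<\delta_0/2$ held for two distinct $k\in\cK$ their mutual distance would be below $\delta_0$, a contradiction, so at most one $k\in\cK$ satisfies this, and likewise at most one satisfies $\ltwonorm{\overline\bu_k+\overline\bu_{j_0}}<\delta_0/2$; for every other $k\in\cK$, $\min\{1-\rho_k,\,1+\rho_k\}\ge\delta_0^2/8$, and since $\max\{1-\rho_k,1+\rho_k\}\ge1$ this yields $1-\rho_k^2\ge\delta_0^2/8=:c_1$. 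Hence $\max_k(1-\rho_k^2)\ge c_1$ and $\nu\le n\tau^*/c_1<(c/c_1)\,\sigma^2\log p$.

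Next I would lower-bound $\max_{k\in\cK}(\sqrt\nu\,\rho_k+g_k)$. Take $k^\star\in\argmax_{k\in\cK}g_k$, so $\sqrt\nu\,\rho_{k^\star}+g_{k^\star}\ge g_{k^\star}-\sqrt\nu$. The centered Gaussian process $(g_k)_{k\in\cK}$ has canonical metric $\|g_k-g_{k'}\|_{L^2}=\sigma\ltwonorm{\overline\bu_k-\overline\bu_{k'}}$, under which $\cK$ is itself a $\sigma\delta_0$-packing, so Sudakov minoration gives $\E\max_{k\in\cK}g_k\ge c_S\,\sigma\delta_0\sqrt{c_{\delta_0}\log p}$ for a universal constant $c_S$. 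Since $\bepsilon\mapsto\max_{k\in\cK}\overline\bu_k^\top\bepsilon$ is $1$-Lipschitz, the Gaussian Poincaré inequality gives $\Var(\max_{k\in\cK}g_k)\le\sigma^2$, so Chebyshev's inequality yields $\max_{k\in\cK}g_k\ge\tfrac12\E\max_{k\in\cK}g_k$ with probability at least $1-4\sigma^2/(\E\max_{k\in\cK}g_k)^2\ge 1-C_1(\log p)^{-1}$ with $C_1:=4/(c_S^2\delta_0^2c_{\delta_0})$. On this event together with $\{|g_{j_0}|\le\sigma\sqrt{2\log(2/\alpha)}\}$ (probability $\ge1-\alpha$), and using $\sqrt\nu<\sqrt{c/c_1}\,\sigma\sqrt{\log p}$, one gets $\sqrt\nu\,\rho_{k^\star}+g_{k^\star}\ge(\tfrac12c_S\delta_0\sqrt{c_{\delta_0}}-\sqrt{c/c_1})\,\sigma\sqrt{\log p}$ and $|\sqrt\nu+g_{j_0}|\le\sqrt{c/c_1}\,\sigma\sqrt{\log p}+\sigma\sqrt{2\log(2/\alpha)}$. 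Choosing $c$ small enough (depending only on $\delta_0,c_{\delta_0}$) that $3\sqrt{c/c_1}<\tfrac12c_S\delta_0\sqrt{c_{\delta_0}}$ and letting $\alpha\to0$ slowly (e.g.\ $\alpha\asymp(\log p)^{-1}$, so $\sqrt{2\log(2/\alpha)}=o(\sqrt{\log p})$), the former quantity exceeds the latter for all large $p$, whence $R_{\cS_{k^\star}}<R_{\cS^*}$; the failure probability is $C_1(\log p)^{-1}$ from Chebyshev plus the Gaussian tail for $g_{j_0}$, which after routine accounting fits inside $1-C_1(\log p)^{-1}-2p^{-1}$.

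The step I expect to be the main obstacle is this last one: obtaining a high-probability lower bound on $\max_{k\in\cK}g_k$ that genuinely dominates $|\overline\bu_{j_0}^\top\by|$. Two features make it delicate. First, Assumption~\ref{ass:lower_bound} is exactly what prevents $\nu$ from being large — if all spurious directions were mutually nearly collinear, $\cS^*$ could truly be the RSS minimizer, so the assumption cannot be dropped. Second, $\overline\bu_{j_0}^\top\by$ contains the single Gaussian $g_{j_0}$, whose naive deviation $\sigma\sqrt{\log p}$ would already swamp the Sudakov floor $c_S\delta_0\sqrt{c_{\delta_0}}\,\sigma\sqrt{\log p}$ (whose constant may be arbitrarily small); one must therefore control $g_{j_0}$ at a sub-$\sqrt{\log p}$ scale, which is what drives the probability rate down to order $(\log p)^{-1}$ rather than polynomial. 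The Schur-complement identity and the packing/Sudakov/concentration estimates are otherwise routine.
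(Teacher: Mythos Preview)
Your proposal is correct and shares the paper's core strategy---Sudakov minoration to lower-bound the supremum of the Gaussian process $(\overline\bu_j^\top\bepsilon)_j$, a variance/concentration bound to convert this into a high-probability statement via Chebyshev, and the smallness of $\tau^*$ to control the signal contribution---but the technical execution differs in two useful ways. First, the paper keeps the three-term decomposition $R_\cS-R_{\cS^*}=\widehat D(\cS)\beta_{j_0}^{*2}+\text{(cross)}-\text{(quadratic)}$ and bounds each piece separately, whereas you collapse everything into the difference of squares $(m+g_{j_0})^2-(m\rho_k+g_k)^2$; your form makes the role of $\rho_k$ and the Schur-complement identity $\widehat D(\cS_k)=n^{-1}\|\widetilde\bu_{j_0}\|^2(1-\rho_k^2)$ transparent, and the packing argument for $1-\rho_k^2\ge\delta_0^2/8$ is a nice touch. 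Second, the paper must bound $\Var(Z^2)$ with $Z=\sup_j\overline\bu_j^\top\bepsilon$, which costs a Dudley upper bound on $\E Z$ and an auxiliary lemma on the variance of a maximum; you sidestep this entirely by applying the Gaussian Poincar\'e inequality directly to $Z$ and controlling the single variable $g_{j_0}$ at scale $\sqrt{\log\log p}$, which is cleaner. The paper's separation buys the explicit $2p^{-1}$ term (from a union bound on the cross term) matching the theorem statement exactly, while your accounting absorbs everything into the $(\log p)^{-1}$ rate; both are valid. Two minor points of polish: the reduction ``by $\bepsilon\mapsto-\bepsilon$ symmetry we may assume $m>0$'' is slightly misphrased since $m$ is deterministic---what you mean is that when $m<0$ the same argument runs with $k^\star:=\argmin_{k\in\cK}g_k$; and you implicitly assume $\|\widetilde\bu_{j_0}\|>0$, i.e., $\bX_{\cS^*}$ has full column rank, which is harmless but worth stating.
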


Theorem \ref{thm:lower_bound} shows that under Assumption \ref{ass:lower_bound}, if the maximum leave-one-out identifiability margin violates the lower bound in \eqref{eq:thm2.1_tau_condition}, then with high probability we fail to recover the true model.

\subsection{Sure screening of BSS with overestimated sparsity}

In this section, we study the model selection property of the best subset selection when the model sparsity is overestimated, i.e., $\hat s > s$. In this scenario, it is impossible for BSS to achieve exact recovery of the true model, but a desirable property to have is that all the true variables are selected, i.e. $\tpr(\cS) = 1$. We call this the sure screening property. Sure screening \citep{FLv08} eliminates spurious variables and allows us to recover the true model from a much smaller pool of predictors. The following theorem characterizes when BSS achieves the sure screening property.

\begin{thm}
	\label{thm:power}		
	Suppose that $\hat s\geq s$, $p \ge 3$, and that the design is fixed. For any $\delta \in (0, 1]$, define more generally the identifiability margin as
	\beq
		\label{eq:tau_delta}
		\tau_*(\hat s, \delta) := \min_{\cS \in \cA(\hat s), |\cS ^ * \setminus \cS| \ge \delta s} \frac{{\bbeta^*_{\cS^* \backslash \cS}}^\top \hat\bD({\cS})\bbeta^*_{\cS^* \backslash \cS}}{|\cS \backslash \cS ^ *|}. 
	\eeq
%	Following the same notation as in Theorem \ref{thm-selection-consistency}, define
%			\[
%			\tau_*(\hat s) := \min_{\cS \in \widehat\cA} \frac{{\bbeta^*_{\cS^* \backslash \cS}}^\top \hat\bD({\cS})\bbeta^*_{\cS^* \backslash \cS}}{|\cS^* \backslash \cS|}.
%			\]
%			Moreover, define
%			\[
%			\tilde\tau_*(\hat s) := \min_{\cS \in \widehat\cA} \frac{{\bbeta^*_{\cS^* \backslash \cS}}^\top \hat\bD({\cS})\bbeta^*_{\cS^* \backslash \cS}}{|\cS \backslash \cS^*|}.
%			\]
		Then there exists a universal constant $C > 1$, such that for any $\xi > C$ and $0 \le  \eta < 1$, whenever
		\beq
			\label{eq:min_tau_condition_1}
			\tau_*(\hat s, \delta)\geq \biggl(\frac{4\xi }{1 - \eta}\biggr)^2\frac{\sigma^2\log p}{n},
		\eeq
	 	we have that
	 	\[
	 		\PP\biggl(\tpr(\cS)\geq 1-\delta,~\forall \cS~\text{s.t.}~|\cS|=\widehat s \text{ and } R_{\cS} \le R_{\cS^*} + n\eta\tau_*(\hat s, \delta)\biggr) \ge 1 - 8sp^{- (C^{-1}\xi - 1)}.
	 	\]
	 	In particular, when \eqref{eq:min_tau_condition_1} holds for some $\delta < s ^ {-1}$, we have that
	 	\beq
	 	    \label{eq:28}
			\PP\biggl(\tpr(\cS) = 1,~\forall \cS~\text{s.t.}~|\cS|=\widehat s \text{ and } R_{\cS} \le R_{\cS^*} + n\eta\tau_*(\hat s, \delta)\biggr) \ge 1 - 8sp^{- (C^{-1}\xi - 1)}.
		\eeq 	
\end{thm}		

\begin{rem}
    \cite{xiong2014better} also studies the sure screening property of BSS when the  true sparsity is overestimated. The assumptions therein essentially require $\tau_*(\widehat s, \delta)$ to be of order at least $n ^ {- 1 / 2}$, which is more restrictive than our lower bound of order $(\log p / n) ^ {-1}$ in \eqref{eq:28}. Besides, \cite{xiong2014better} assumes Gaussian noise, while the theorem above accommodates all sub-Gaussian noise.
\end{rem}	 

Theorem \ref{thm:power} can be regarded as a generalization of Theorem \ref{thm-selection-consistency}. We can deduce Theorem \ref{thm-selection-consistency} from Theorem \ref{thm:power} by setting $\hat s =s$ and $\delta = 0$. Besides, note that $\tau_*(\hat s, \delta)$ is a monotonically increasing function with respect to $\delta$. Therefore, a larger $\delta$ implies that the condition \eqref{eq:min_tau_condition_1} is weaker, which corresponds to weaker TPR guarantee. Finally, if we are able to obtain $\widehat \bbeta ^ {\bss}(\hat s)$ exactly, the resulting set of selected variables $\hat \cS$ satisfies that $R_{\hat \cS} \le R_{\cS ^ *}$ and thus enjoys the established TPR guarantee. However, the pursuit of the exact solution is unnecessary: the requirement that $R_{\cS} \le R_{\cS ^ *} + n\eta\tau_*(\hat s, \delta)$ suggests that a good approximated solution to the best $\hat s$-subset selection problem suffices to achieve the TPR guarantee. The next section shows that an IHT algorithm can provide such a qualified approximation.

\section{Iterative hard thresholding} \label{sec:iht}

This section introduces a two-stage iterative hard thresholding (IHT) algorithm that approximately solves the BSS problem and enjoys the TPR guarantee as characterized in Theorem \ref{thm:power}. IHT is a popular family of algorithms for compressed sensing and sparse regression \citep{BDa08, BDa09, JTK14}. The original IHT algorithm \citep{BDa08, BDa09} is essentially $\ell_0$-norm projected gradient descent: in each iteration, the algorithm applies hard thresholding on each coordinate of the gradient to sparsify the gradient and then performs a step of sparse gradient descent. \cite{BDa08} showed that this IHT algorithm is able to converge to local optima of the BSS problem. Nevertheless, this result does not suffice for deriving the variable selection properties of IHT because of lack of statistical guarantee for these local optima. One more recent work \citet[][]{JTK14} investigated a two-stage fully corrective IHT algorithm, which is similar to Compressed Sampling Matching Pursuit (CoSaMP) proposed by \cite{needell2009cosamp}. They showed that when the loss function $\cL(\bbeta)$ in \eqref{eq:best_subset} satisfies the restricted strong convexity (RSC) and restricted strong smoothness (RSS) conditions, this two-stage IHT algorithm is able to achieve a lower objective value than the global minimum of \eqref{eq:best_subset} (i.e., $\cL(\hat\bbeta^{\bss}(\hat s)$) by selecting slightly more than $\hat s$ variables. Despite the sparsity relaxation there, which is inevitable given the NP-hardness of the problem, this result directly targets the global optimum and thus unlocks the potential for studying the model selection properties of the iterates of IHT. 

In this section, we focus on the two-stage fully corrective IHT that is analyzed in \cite{JTK14}. We aim to address the following question: can this IHT algorithm inherit the model selection properties we establish in Section \ref{sec:bss}? Apparently, one cannot expect it to achieve model consistency, because it needs to select a larger model than the true one to ensure the goodness of fit \citep[][Theorem~4]{JTK14}. Therefore, our main interest here is to see whether IHT enjoys the sure screening property as established in Theorem \ref{thm:power}, or more generally, to assess the TPR of the solution of IHT. In the sequel, we first formally introduce the IHT algorithm and then establish its TPR guarantee.

\subsection{Algorithm}
Here we introduce the two-stage fully corrective IHT algorithm in \cite{JTK14}. For any $\bv \in \R^p$ and $r \in \N$, let
\[
\cT_{\mathrm{abs}}(\bv, r) := \bigl\{j: |v_j| \text{ is among the top }r\text{ largest values of }\{|v_k|\}_{k = 1}^p \bigr\}.
\]
The pseudocode of the algorithm is presented in Algorithm \ref{alg-IHT2}. As the name indicates, this IHT algorithm has two stages in each iteration: variable recruiting and elimination. It first recruits the variables that correspond to the largest components of the gradient. When $\cL$ is square loss, the gradient is the covariance between the residuals $\by - \bX\hat\bbeta_t$ and the predictors $\bX$. Hence, this recruiting stage can be interpreted as pulling in the variables with the highest marginal explanation power for the residuals. Then, the algorithm fits an OLS on the resulting expanded model (``fully corrective'' step). Finally, the algorithm eliminates the variables with small coefficients in the OLS (the second stage), so that the sparsity of the model reduces back to $\pi$. In a nutshell, Algorithm \ref{alg-IHT2} alternates between forward and backward selection until the model selection becomes stationary. Once the convergence threshold is hit, Algorithm \ref{alg-IHT2} adjusts the model size to be $\hat s$ by another round of variable recruiting or elimination that is similar to that inside the loop.

For brevity, we refer to Algorithm \ref{alg-IHT2} as simply IHT from now on.

\begin{algorithm}[t]
	\caption{Two-Stage Fully Corrective IHT}	
	\label{alg-IHT2}
	\begin{algorithmic}[1]		
		\STATE \textbf{{Input}}:
		Initial value ${\widehat \bbeta}_{0}=\bzero$, projection size $\pi$, expansion size $l$, sparsity estimate $\hat s$, convergence threshold $\tau > 0$.
		\STATE $t \leftarrow 0$
		% \Do
		\REPEAT
		\STATE $\cG_t \leftarrow \topabs(\nabla\cL({\widehat \bbeta_t}), l)$
		\STATE $\cS_t \leftarrow \supp({\widehat \bbeta}_t) \cup \cG_t$
		\STATE $\widehat \bbeta^\dag_t \leftarrow (\bX_{\cS_t}^\top \bX_{\cS_t})^{-1}\bX_{\cS_t}^\top \by$
		\STATE $\cS^\dag_t \leftarrow \topabs(\widehat\bbeta^\dag_t, \pi)$
		\STATE $\widehat \bbeta_{t + 1} \leftarrow (\bX_{\cS^\dag_t}^\top \bX_{\cS^\dag_t})^{-1}\bX_{\cS^\dag_t}^\top \by$
		\STATE $t \leftarrow t + 1$
		%	\doWhile {$\|\widehat \bbeta_{t} - \widehat \bbeta_{t - 1}\|_2 > \tau$}
		\UNTIL {$\|\widehat \bbeta_{t} - \widehat \bbeta_{t - 1}\|_2 \le \tau$}
		\STATE $\widehat \bbeta^{\mathrm{iht}} \leftarrow \widehat \bbeta_t$
		\STATE $\widehat \cS^{\mathrm{iht}} = \cT_{\mathrm{abs}}\bigl(\widehat\bbeta^{\mathrm{iht}}, \min(\widehat s, \pi)\bigr) \bigcup \cT_{\mathrm{abs}}\bigl(\nabla\cL\bigl(\widehat\bbeta^{\mathrm{iht}}\bigr), \max(0, \widehat s - \pi)\bigr)$
		\STATE \textbf{{Output}}: $\widehat \cS^{\mathrm{iht}}$.
	\end{algorithmic}
\end{algorithm}

\subsection{TPR guarantees of IHT}

In this section, we establish the TPR guarantee of the iterates of IHT. Let $\cL(\bbeta) = n ^ {-1}\sum_{i = 1} ^ n (y_i - \bx_i ^ \top\bbeta) ^ 2$. Define $L := \max_{|\cS|\leq 2\pi+l}\lambda_{\max}(\hat\bSigma_{\cS\cS}), \allowbreak\alpha := \min_{|\cS|\leq 2\pi+s}\lambda_{\min}(\hat\bSigma_{\cS\cS}) \text{ and } \kappa:=L / \alpha$, where $\pi$ and $l$ are the projection size and expansion size in IHT. We first present a proposition on the optimization error rate of IHT \citep[][Theorem 4]{JTK14} that serves as the backbone of our TPR analysis. 
\begin{prop}
	\label{prop:iht_opt}
	Choose $l\geq s$ and $\pi\geq 4\kappa^2l + s - l \ge 4\kappa ^ 2s$ in IHT. Denote the $t$th iteration of IHT by $\widehat \bbeta ^ {\iht}_t$. There exists a universal constant $C$ such that for any $\epsilon > 0$ and any $t  \ge C\kappa\log\frac{\cL(\widehat\bbeta^{\iht}_0)}{\epsilon}$, we have that $\cL\bigl(\widehat \bbeta ^ {\iht}_t\bigr) - \cL\bigl(\widehat\bbeta ^ {\mathrm{best}}(s)\bigr) \le \epsilon$.
\end{prop}

Proposition \ref{prop:iht_opt} shows that as long as the restricted condition number $\kappa$ is well above from zero, IHT can fast approximate the minimum objective of the best-$s$-subset problem \eqref{eq:best_subset} with $\pi$ variables. Combining this optimization guarantee with Theorem \ref{thm:power}, we can establish the following TPR guarantee of IHT.
			
\begin{thm}
	\label{thm:iht}
	Suppose that $p \ge 3$ and that the design is fixed. For any $\pi > s$, let the identifiability margin $\tau_*(\pi, \delta)$ be defined as in Theorem \ref{thm:power}. Choose the same $\pi$ and $l$ as in Proposition \ref{prop:iht_opt}. Then there exist universal constants $C_1, C_2$ such that for any $\xi > C_1, \delta \in (0, 1]$ and $0\le  \eta < 1$, whenever
	\beq
		\label{eq:min_tau_condition_cor1}
		\tau_*(\pi, \delta) \geq \biggl(\frac{4\xi}{1 - \eta}\biggr)^2\frac{\sigma^2\log p}{n},
	\eeq
 	we have that
	\[
		\PP\bigl(\tpr(\widehat\bbeta ^ {\iht}_t) \ge 1-\delta\bigr) \ge 1 - 8sp^{- (C_1^{-1}\xi - 1)}
	\]
	for any $t\geq C_2\kappa\log\frac{\cL(\widehat\bbeta^{\iht}_0)}{n\eta\tau_*(\pi, \delta)}$. In particular, when \eqref{eq:min_tau_condition_cor1} holds for $\delta < s ^ {-1}$, we have that
	\[
		\PP\bigl(\tpr(\widehat\bbeta ^ {\iht}_t)  = 1\bigr) \ge 1 - 8sp^{- (C_1^{-1}\xi - 1)}
	\]
	for any $t\geq C_2\kappa\log\frac{\cL(\widehat\bbeta^{\iht}_0)}{n\eta\tau_*(\pi, \delta)}$.
\end{thm}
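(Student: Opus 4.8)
The plan is to combine two ingredients: the deterministic convergence guarantee of the IHT algorithm \cite[][Theorem~1]{JTK14}, which controls $\cL(\widehat\bbeta^{\iht}_t)$ after $O(\kappa\log(\cdot))$ iterations, and the statistical screening guarantee of Theorem~\ref{thm:power} instantiated at sparsity level $\hat s = \pi$. The key point is that once the fixed design $\bX$ is given, the quantity $\tau_*(\pi,\delta)$ and the eigenvalue constants $\alpha$, $L$, $\kappa$ are non-random numbers, so the optimization part can be analyzed deterministically and only the statistical part carries a failure probability.

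First I would handle the optimization. Writing $\cL(\bbeta) = \tfrac12\norm{\by - \bX\bbeta}_2^2$, the restricted strong convexity and restricted strong smoothness conditions needed by \cite{JTK14} hold for the fixed design with constants $\alpha$ and $L$ at the sparsity levels $2\pi + s$ and $2\pi + l$ --- exactly the levels appearing in the definitions of $\alpha$ and $L$ --- since these dominate the union of the supports of the running iterate $\widehat\bbeta_t$ (size $\le \pi$), the fully corrective OLS fit $\widehat\bbeta^\dag_t$ on the expanded set $\cS_t$ (size $\le \pi + l$), and an $l$-sparse comparison vector. Therefore, under $l \ge s$ and the relaxation $\pi \ge 4\kappa^2 l$, \cite[][Theorem~1]{JTK14} yields, for every $\epsilon>0$ and every $t \ge C_2\kappa\log(\cL(\widehat\bbeta^{\iht}_0)/\epsilon)$,
\[
	\cL(\widehat\bbeta^{\iht}_t) \le \min_{\norm{\bbeta}_0 \le l}\cL(\bbeta) + \epsilon.
\]
Since $\cS^*$ has size $s \le l$, the OLS fit supported on $\cS^*$ is feasible on the right-hand side, so $\min_{\norm{\bbeta}_0\le l}\cL(\bbeta) \le \tfrac12 R_{\cS^*}$; and since $\widehat\bbeta^{\iht}_t$ is by construction the OLS fit on the size-$\pi$ set $\cS^\dag_{t-1}$, we have $\cL(\widehat\bbeta^{\iht}_t) = \tfrac12 R_{\cS^\dag_{t-1}}$. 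Taking $\epsilon = \tfrac12 n\eta\tau_*(\pi,\delta)$ (admissible because $\tau_*(\pi,\delta)$ is non-random) and recalling $\widehat\bbeta^{\iht}_0 = \bzero$, we obtain, deterministically given $\bX$,
\[
	R_{\cS^\dag_{t-1}} \le R_{\cS^*} + n\eta\tau_*(\pi,\delta)
	\quad\text{for all } t \ge C_2\kappa\log\frac{\cL(\widehat\bbeta^{\iht}_0)}{n\eta\tau_*(\pi,\delta)},
\]
after absorbing an absolute constant into $C_2$.

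Next I would invoke Theorem~\ref{thm:power} with $\hat s = \pi$. Its hypotheses hold: $\pi \ge 4\kappa^2 l \ge s$, $p \ge 3$, the design is fixed, and \eqref{eq:min_tau_condition_cor1} is precisely the required lower bound on $\tau_*(\pi,\delta)$. This produces an event $\cE$ with $\PP(\cE) \ge 1 - 8sp^{-(C_1^{-1}\xi-1)}$, where $C_1$ is the universal constant of Theorem~\ref{thm:power}, on which every size-$\pi$ set $\cS$ with $R_{\cS} \le R_{\cS^*} + n\eta\tau_*(\pi,\delta)$ satisfies $\tpr(\cS) \ge 1-\delta$. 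Combined with the previous display, on $\cE$ the set $\cS^\dag_{t-1}$, which has size $\pi$, satisfies $\tpr(\cS^\dag_{t-1}) \ge 1-\delta$ for all $t$ in the stated range. It remains to pass from $\cS^\dag_{t-1}$ to $\supp(\widehat\bbeta^{\iht}_t)$: these sets are nested, $\supp(\widehat\bbeta^{\iht}_t) \subseteq \cS^\dag_{t-1}$, and they coincide on a probability-one event, since $\by$ has a Lebesgue density and hence, for each of the finitely many size-$\pi$ sets $\cS$, the OLS vector $(\bX_{\cS}^\top\bX_{\cS})^{-1}\bX_{\cS}^\top\by$ has all coordinates nonzero almost surely. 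Intersecting with this probability-one event leaves $\PP(\cE)$ unchanged, and on it $\tpr(\widehat\bbeta^{\iht}_t) = \tpr(\cS^\dag_{t-1}) \ge 1-\delta$, which is the first assertion. For the second, $\tpr(\widehat\bbeta^{\iht}_t)$ takes values in $\{0, 1/s, \dots, 1\}$, so $\tpr(\widehat\bbeta^{\iht}_t) \ge 1-\delta > 1 - s^{-1}$ forces $\tpr(\widehat\bbeta^{\iht}_t) = 1$.

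The main obstacle I expect is the optimization step above: verifying that the least-squares objective satisfies the precise RSC/RSS requirements of \cite{JTK14} at the sparsity levels $2\pi+s$ and $2\pi+l$ --- i.e.\ tracking which intermediate vectors of Algorithm~\ref{alg-IHT2} enter the relevant support unions, and checking that $\pi \ge 4\kappa^2 l$ is the right degree of relaxation so that the comparison can be made against the best $l$-sparse fit, hence against $R_{\cS^*}$ --- and then transcribing \cite{JTK14}'s geometric-contraction estimate into the stated iteration count with target accuracy $\epsilon \asymp n\eta\tau_*(\pi,\delta)$. Once this deterministic step is secured, coupling it with Theorem~\ref{thm:power} is routine, as the optimization analysis introduces no additional randomness.
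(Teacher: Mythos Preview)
Your approach is essentially the same as the paper's: invoke the convergence result of \cite{JTK14} (the paper cites Theorem~4 there rather than Theorem~1, but the content is the same) to get $\cL(\widehat\bbeta^{\iht}_t)\le R_{\cS^*}+n\eta\tau_*(\pi,\delta)$ after $O(\kappa\log(\cdot))$ steps, then apply Theorem~\ref{thm:power} with $\hat s=\pi$. One small caveat: your density argument for $\supp(\widehat\bbeta^{\iht}_t)=\cS^\dag_{t-1}$ assumes $\by$ has a Lebesgue density, but the paper only assumes the noise is sub-Gaussian; the paper simply writes $R_{\supp(\widehat\bbeta^{\iht}_t)}=\cL(\widehat\bbeta^{\iht}_t)$ and applies Theorem~\ref{thm:power} without addressing this point, so you are in fact being more careful here than the original.
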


\begin{rem}
    We emphasize that the restricted condition number $\kappa$ only affects the convergence speed of the IHT algorithm and has no impact on the TPR guarantee that IHT can ultimately achieve after sufficiently many iterations. In other words, $\kappa > 0$ is an algorithmic rather than statistical requirement, and $\tau_*(\pi, \delta)$ is still the underpinning quantity that determines if IHT is able to achieve sure screening. 
\end{rem}

Given the sure screening property and the sparsity level of $\widehat \bbeta ^ {\iht}_t$, one can compute the BSS problem \eqref{eq:best_subset} on $\supp(\widehat\bbeta^{\iht}_t)$ to further enhance the quality of model selection. For any sparsity estimate $\hat s$, define $\widetilde \bbeta ^ {\iht}_t(\hat s)$ to be the solution of the best-$\hat s$ subset selection on the support by $\widehat\bbeta^{\iht}_t$, i.e.,
\[
\widetilde \bbeta ^ {\iht}_t(\hat s) := \argmin_{\substack{\bbeta \in \R^p, \|\bbeta\|_0 \le \hat s, \\ \supp(\bbeta) \subset \supp( \widehat \bbeta ^ {\iht}_t)}} \cL(\bbeta).
\]
The following corollary shows that should the true sparsity is known, the resulting two-step procedure is able to recover exactly the true model with high probability.

\begin{cor}
	\label{cor:two_stage}
	Choose $l \ge s$ and $\pi \ge 4\kappa ^ 2l$ in IHT. Under the same assumptions as in Theorem \ref{thm:iht}, there exist universal constants $C_1, C_2$ such that for any $\xi > C_1$ and $0\le  \eta < 1$, whenever $ \tau_* (\pi, \delta) \ge 16\xi ^ 2\sigma ^ 2 \log (p) / \{(1 - \eta) ^ 2n\}$ for some $\delta < s ^ {-1}$, we have that
	\[
		\PP\biggl\{\supp\{\widetilde \bbeta ^ {\iht}_t(s)\} = \cS^*, \forall t\geq C_1\kappa\log\biggl(\frac{\cL(\widehat\bbeta_0^{\iht})}{n\eta\tau_*(\pi, \delta)}\biggr)\biggr\} \ge 1 - 8sp^{- (C_2^{-1}\xi - 1)}.
	\]
\end{cor}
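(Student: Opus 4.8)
The plan is to concatenate the sure-screening guarantee of Theorem~\ref{thm:iht} with the exact-recovery guarantee of Theorem~\ref{thm:power} (equivalently Theorem~\ref{thm-selection-consistency}), both run with the same $\delta<s^{-1}$; the one ingredient beyond quoting these two results is a monotonicity fact comparing $\tau_*(s,\delta)$ with $\tau_*(\pi,\delta)$. Write $\widehat\cS_t:=\supp(\widehat\bbeta^{\iht}_t)$ and let $\bar t$ be the iteration threshold $C_2\kappa\log\{\cL(\widehat\bbeta^{\iht}_0)/(n\eta\tau_*(\pi,\delta))\}$ from Theorem~\ref{thm:iht}; by construction of Algorithm~\ref{alg-IHT2} each iterate obeys $|\widehat\cS_t|\le\pi$. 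Since $l\ge s$, $\pi\ge4\kappa^2l$, and, by hypothesis, $\tau_*(\pi,\delta)\ge\{4\xi/(1-\eta)\}^2\sigma^2\log p/n$ for some $\delta<s^{-1}$, Theorem~\ref{thm:iht} supplies an event $\cE_1$ with $\PP(\cE_1)\ge1-8sp^{-(C_1^{-1}\xi-1)}$ on which $\tpr(\widehat\bbeta^{\iht}_t)=1$, i.e.\ $\cS^*\subseteq\widehat\cS_t$, for every $t\ge\bar t$.

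Second, I would apply Theorem~\ref{thm:power} to the full fixed design $\bX$ with sparsity input $\hat s=s$ (the value for which exact recovery is meaningful) and the same $\delta$. It yields a single event $\cE_2$ with $\PP(\cE_2)\ge1-8sp^{-(C^{-1}\xi-1)}$ on which \emph{every} $\cS\subset[p]$ with $|\cS|=s$ and $R_\cS\le R_{\cS^*}+n\eta\tau_*(s,\delta)$ satisfies $\tpr(\cS)=1$, hence $\cS=\cS^*$ because $|\cS|=s=|\cS^*|$; as this is a statement about all $s$-subsets of the fixed $\bX$ simultaneously, it may subsequently be specialized to the data-dependent output of the second stage. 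Invoking Theorem~\ref{thm:power} requires checking its hypothesis $\tau_*(s,\delta)\ge\{4\xi/(1-\eta)\}^2\sigma^2\log p/n$, and here is where the monotonicity $\tau_*(s,\delta)\ge\tau_*(\pi,\delta)$ (for $s\le\pi$ and $\delta<s^{-1}$) enters: any false active set $\cS$ of size $s$ can be enlarged to one of size $\pi$ by adjoining $\pi-s$ spurious predictors (feasible since the standing assumption $\alpha>0$ forces $2\pi+s\le n$, whence $\pi+s<n\le p$, leaving more than $\pi$ spurious predictors to draw from); doing so leaves $\cS^*\setminus\cS$ — hence the index block of the numerator — unchanged, can only shrink $\hat\bD(\cS)=n^{-1}\bX_{\cS^*\setminus\cS}^\top(\bI-\bP_{\bX_\cS})\bX_{\cS^*\setminus\cS}$ in the Loewner order (enlarging $\cS$ enlarges $\bP_{\bX_\cS}$), and strictly enlarges $|\cS\setminus\cS^*|$, so the ratio defining $\tau_*$ does not increase. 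Therefore $\tau_*(s,\delta)\ge\tau_*(\pi,\delta)\ge\{4\xi/(1-\eta)\}^2\sigma^2\log p/n$, as required.

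It remains to combine on $\cE_1\cap\cE_2$. Fix $t\ge\bar t$. Since $\alpha=\min_{|\cS|\le2\pi+s}\lambda_{\min}(\hat\bSigma_{\cS\cS})>0$, every $s$-subset of $\widehat\cS_t$ indexes a full-rank submatrix of $\bX$, so the inner best-$s$-subset selection on $\widehat\cS_t$ returns $\widehat\cS:=\supp\{\widetilde\bbeta^{\iht}_t(s)\}$ of size $s$; on $\cE_1$ the competitor $\cS^*\subseteq\widehat\cS_t$ is feasible for that inner minimization, so $R_{\widehat\cS}\le R_{\cS^*}\le R_{\cS^*}+n\eta\tau_*(s,\delta)$, and $\cE_2$ then forces $\widehat\cS=\cS^*$. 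This holds for all $t\ge\bar t$, and a union bound gives $\PP(\cE_1\cap\cE_2)\ge1-8sp^{-(C_1^{-1}\xi-1)}-8sp^{-(C^{-1}\xi-1)}$; using $p\ge3$ to absorb the factor two and enlarging the threshold on $\xi$, this becomes $\ge1-8sp^{-(C_2^{-1}\xi-1)}$ after relabelling the two universal constants as $(C_1,C_2)$ and the constant in $\bar t$ as $C_1$ — exactly the assertion.

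The step I expect to be the main obstacle is the monotonicity $\tau_*(s,\delta)\ge\tau_*(\pi,\delta)$: it is the one place requiring an argument about the $\ell_0$-geometry rather than a citation, and care is needed that (i) enlargement by spurious coordinates keeps the set false with $|\cS^*\setminus\cS|$ unchanged and still $\ge\delta s$, (ii) the Schur-complement monotonicity is applied to the common index block $\cS^*\setminus\cS$, and (iii) there really are $\pi-s$ unused spurious predictors, which is what ties the step to the ambient-dimension constraints implicit in the hypotheses. Everything else — matching the two probability bounds, the iteration thresholds, and noting that the slack $n\eta\tau_*(s,\delta)$ is the larger (hence weaker) candidate and is met trivially by $R_{\widehat\cS}\le R_{\cS^*}$ — is routine.
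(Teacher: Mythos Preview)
Your proposal is correct and follows the same route as the paper's own proof: first invoke Theorem~\ref{thm:iht} for sure screening $\cS^*\subseteq\supp(\widehat\bbeta^{\iht}_t)$, then apply Theorem~\ref{thm:power} at $\hat s=s$ to force the second-stage best-$s$-subset output to equal $\cS^*$. You are in fact more careful than the paper, which simply writes ``applying the second conclusion of Theorem~\ref{thm:power} again'' without verifying its hypothesis $\tau_*(s,\delta)\ge\{4\xi/(1-\eta)\}^2\sigma^2\log p/n$ or making the union bound over the two invocations explicit; your monotonicity argument $\tau_*(s,\delta)\ge\tau_*(\pi,\delta)$ via spurious enlargement and the Loewner ordering of $\bI-\bP_{\bX_\cS}$ is exactly the missing ingredient, and your concern that this is ``the main obstacle'' is well placed.
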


{
\begin{rem}
Two-stage and multi-stage screening procedures are common variable seletion strategies for high-dimensional sparse linear models \citep{wasserman2009high, ji2012ups, wang2020bridge}. \cite{wasserman2009high} propose to first fit multiple candidate models on the solution path of LASSO, then select one model by cross validation and finally use hypothesis testing to eliminate some variables. Under the sparse Riesz condition, they show that the proposed approach is able to achieve model consistency. \cite{ji2012ups} propose a screen and clean approach called Univariate Penalization Screening (UPS) and show that it delivers optimal asymptotic rate of Hamming risk under sparse correlation between covariates. \cite{wang2020bridge} consider a two-stage approach that first obtains the bridge estimators of the regression coefficients and then thresholds the estimators to select important variables. They evaluate asymptotic FDP and TPR with precise constants under independent Gaussian design. The difference between our result and these is that we are mainly interested in model selection under highly correlated design. Our take-home message is that regardless of how large $\kappa$ is, our two-stage approach can achieve model consistency with high probability given the true sparsity and a sufficient number of iterations, which illustrates the robustness of the approach against design dependence. 
% Among them, the most relevant work to ours is \cite{wang2020bridge}, which consider two-stage bridge estimators under various signal-to-noise ratios and independent design ($X_{ij} \overset{i.i.d.}{\sim} \mathcal N(0, 1/n)$). 
\end{rem}
}

\section{Simulation study} \label{sec:sim}

The goal of this section is to compare the TPR-FDR curves corresponding to the solution paths of IHT and other competing methods on synthetic datasets. An ideal model selector should exhibit high TPR and low FDR once configured appropriately, yielding a $\Gamma$-shaped TPR-FDR curve. We consider the following three competing methods:
\begin{itemize}
		\item Sure Independence Screening (SIS, \cite{FLv08}): SIS selects the variables that have top marginal correlation with the response. It is essentially the very first iteration of IHT with zero initialization and standardized design.
		
		\item LASSO: LASSO chooses $p_{\lambda}(|\beta|) = \lambda |\beta|$ in \eqref{eq-Mest}.
		
		\item SCAD: SCAD chooses $p_{\lambda}$ in \eqref{eq-Mest} satifisfying that
		\[
			p'_{\lambda}(|\beta|) = \lambda\biggl\{1_{\{|\beta| \le \lambda\}} + \frac{(a\lambda - |\beta|)_+}{(a - 1)\lambda}1_{|\beta| > \lambda}\biggr\}.
		\]
	\end{itemize}
In IHT, we choose the projection size $\pi$ to be $50$ or $100$. Then we plot the TPR against the FDR of $\widehat \bbeta^{\iht}(\hat s)$ as $\hat s$ varies from $1$ to $p$. As for LASSO and SCAD, we compute and present their TPR and FDR as $\lambda$ goes through a properly predefined sequence. Moreover, we point out the average FDR and TPR of LASSO and SCAD with $\lambda$ maximizing the 10-fold cross validation score. Columns of $\bX$ are always standardized before being fed to the algorithms.

We generate the data as follows:
\begin{enumerate}
	\item $p=1,000$, $s=50$, $\cS^* = [s]$ and $n=\lceil{2s\log p}\rceil$; \
	\item $\beta^*_j = 0$ for $j \in (\cS^*)^c$, and $\{(\beta^*_j / \beta_{\min}) - 1\}_{j \in \cS ^ *} \overset{\mathrm{i.i.d.}}{\sim} \chi_1^2$, where $\beta_{\mathrm{min}} = 0.1$;
	\item $\{\bx_i\}_{i \in [n]}\overset{\mathrm{i.i.d.}}{\sim}\mathcal N(\bzero,\bSigma)$ and $\{\epsilon_i\}_{i \in [n]}\overset{\mathrm{i.i.d.}}{\sim}\mathcal N(0,\sigma^2)$, where $\bSigma$ and $\sigma$ are specified in the subsequent subsections.
\end{enumerate}
We consider three possible setups of $\bSigma$ for comparison: covariance with exponential decay, equicorrelation and a factor model.

\subsection{Covariance with exponential decay}

Here for $i, j\in[p]$, we set $\bSigma_{i,j}=q^{|i-j|}$ where we choose $q=0,0.5,0.8$. We consider two noise levels: $\sigma=0.3$ or 0.6. We present the TPR-FDR curves of the aforementioned selection methods in Fig. \ref{exp}. We have the following observations: 

\begin{enumerate}
		\item[(i)] In the low-noise setting ($\sigma = 0.3$), the IHT algorithm yields nearly a $\Gamma$-shaped TPR-FDR curve regardless of the choice of $\pi = 50$ or $100$. In contrast, as the design becomes more and more correlated ($q$ increases), the TPR-FDR curves of LASSO and SCAD gradually deviate from the $\Gamma$ shape. SIS performs poorly, as it ignores the correlation of the designs. 
		
		\item[(ii)] In the high-noise setting ($\sigma = 0.6$), all the investigated approaches perform worse than in the low-noise setting. Nevertheless, as clearly illustrated by Figure \ref{exp}(f), IHT yields reasonably high TPR when FDR is below $0.1$, while neither LASSO, SCAD nor SIS has TPR higher than $0.2$ when their FDR is below 0.1. This suggests that IHT is better than LASSO, SCAD and SIS in terms of controlling FDR in the presence of strong collinearity of design and high noise level. 
		
		\item[(iii)] The CV-tuned LASSO and SCAD have high FDR, suggesting that they tend to select dense models to achieve good prediction performance.
	\end{enumerate}

\begin{figure}[bht]
\centerline{\includegraphics[width=16cm]{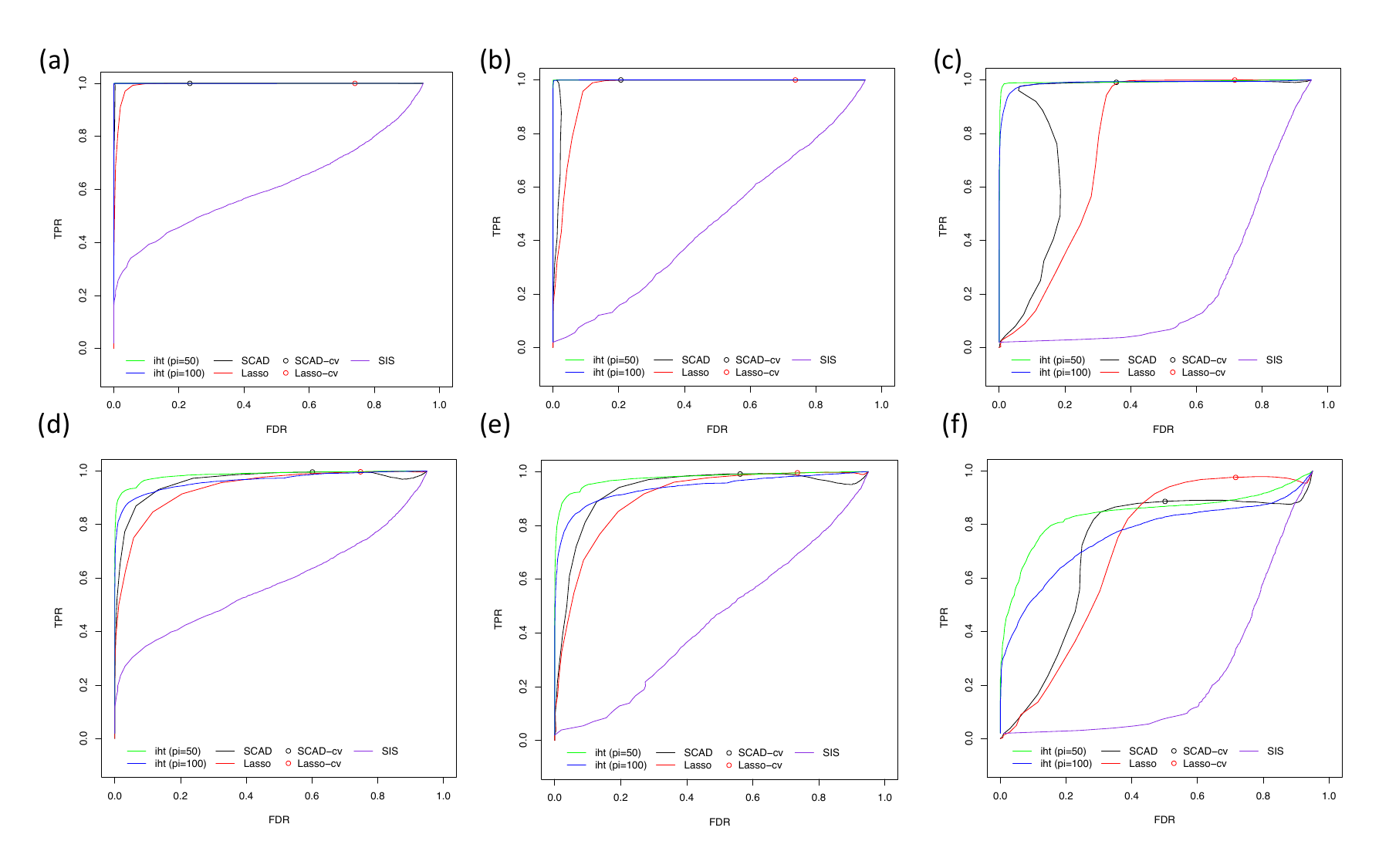}}
\caption{Setting 1: Covariance with exponential decay ($\bSigma_{i,j}=q^{|i-j|}$ where $q=0,0.5$ and $0.8$). In the first and second rows, the noise level is set as $\sigma=0.3$ and $0.6$ respectively; The first, second and third columns correspond to $q=0,0.5$ and $0.8$ respectively. The black and red lines represent the TPR-FDR curves of $\widehat \bbeta ^ {\pen}$ as $\lambda$ varies, while the green and blue lines represent the TPR-FDR curves of $\{\widehat \bbeta^{\iht}(\widehat s)\}_{\widehat s=1}^p$ with projection size $\pi=50$ and $100$ respectively as $\hat s$ varies. The blueviolet curve represents the TPR-FDR curve of SIS as the selected model size varies. The dots indicate the TPR and FDR of $\widehat \bbeta^{\pen}$ with $\lambda$ chosen by 10-fold cross validation. }
\label{exp}
\end{figure}

\subsection{Equicorrelation}
Here we set $\bSigma$ as follows: $\bSigma_{i,j}=1$ if $i=j$, and $\bSigma_{i,j}=q$ otherwise. We again consider two noise levels: $\sigma=0.3$ or $0.6$, and illustrate the TPR and FDR of all the methods in Figure \ref{const}. We have the following two observations: 

\begin{enumerate}
		\item[(i)] The performance gap between IHT and other methods further widens compared with Section 4.1: IHT still yields a nearly $\Gamma$-shaped TPR-FDR curve, while neither LASSO, SCAD nor SIS does so. In particular, in Figure \ref{const}(d), IHT yields reasonably high TPR when its FDR is below $0.1$, while LASSO, SCAD and SIS have nearly zero TPR when their FDR is below 0.1. This further suggests that IHT is substantially more robust to collinearity of the design than LASSO, SCAD and SIS in terms of model selection. 
		
		\item[(ii)] Blessed by the nonconvexity of the penalty, SCAD is able to correct its selected model by replacing spurious variables with correct ones when $\lambda$ decreases within a certain phase. This is the reason that the TPR-FDR curve of SCAD can pivot in the north-west direction to the oracle point ($\mathrm{TPR} = 1$, $\mathrm{FDR} = 0$), and that the CV-tuned SCAD estimator has lower FDR than the CV-tuned LASSO estimator. 
\end{enumerate}

\begin{figure}[bht]
\centerline{\includegraphics[width=12cm]{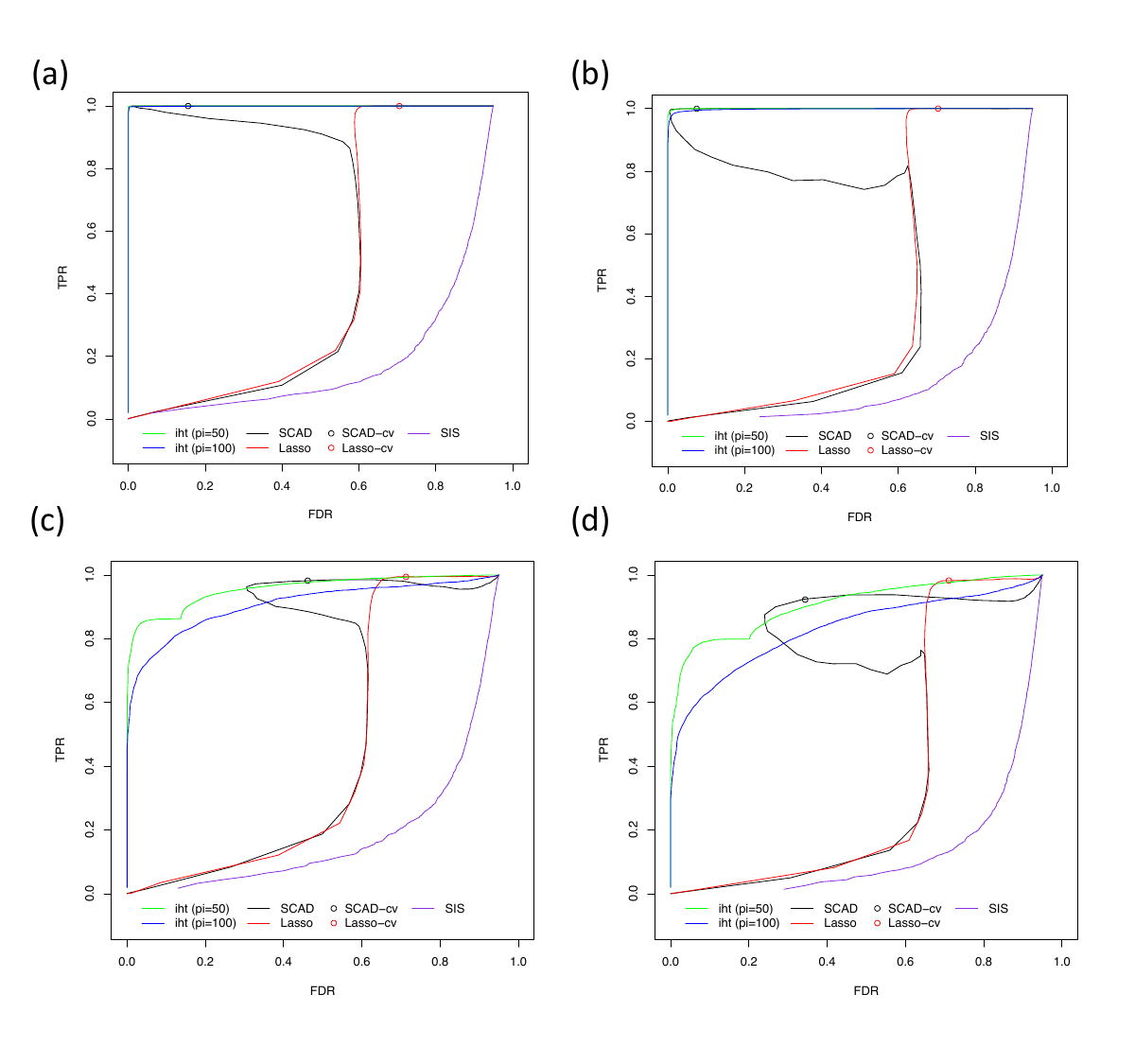}}
\caption{Setting 2: Constant correlation model ($\bSigma_{i,j}=1$ if $i=j$, and $\bSigma_{i,j}=q$ otherwise). In the first and second rows, the noise level is set as $\sigma=0.3$ and $0.6$ respectively. The first and second columns correspond to $q=0.5$ and $0.8$ respectively. 
%	\jf{Q: right caption?} 
The black and red lines represent the TPR-FDR curves of $\widehat \bbeta ^ {\pen}$ as $\lambda$ varies, while the green and blue lines represent the TPR-FDR curves of $\{\widehat \bbeta^{\iht}(\widehat s)\}_{\widehat s=1}^p$ with projection size $\pi=50$ and $100$ respectively as $\hat s$ varies. The dots indicate the TPR and FDR of $\widehat \bbeta^{\pen}$ with $\lambda$ chosen by 10-fold cross validation. The blueviolet curve represents the TPR-FDR curve of SIS.}
\label{const}
\end{figure}

\subsection{Factor model}

Here we let $\bSigma=\bSigma_b+\bSigma_u$, where $\bSigma_u=\bI$ and $\bSigma_b=\bV\bLambda_0
\bV^\top$ contains the spiky part of the covariance structure. Here we let $\bV\in \cO_{p, K}$, where
$$
\cO_{p, K}=\left\{ \bU\in\R^{p\times K}: \bU^\top\bU=\bI_{K\times K}  \right\}.
$$
We let $\bLambda_0\in\R^{K\times K}$ be a diagonal matrix consisting of the $K$ spiky eigenvalues of $\bSigma_b$. We let $K=2$ and consider the following two cases: $\bLambda_0=\diag(2p,p)$ and $\diag(2\sqrt{p},\sqrt{p})$. We present the TPR-FDR curves in Fig. \ref{fm}. Similarly to the previous two cases, IHT is still the best among all the investigated methods in terms of FDR and TPR, especially when the covariance structure is more spiky ($\bLambda_0 = \diag(2p, p)$). Besides, as $\lambda$ decreases from its largest value, the TPR-FDR curve of SCAD first goes east and then pivots sharply to the west to follow the curve of IHT, thanks to its nonconvex penalty. 

\begin{figure}[bht]
\centerline{\includegraphics[width=12cm]{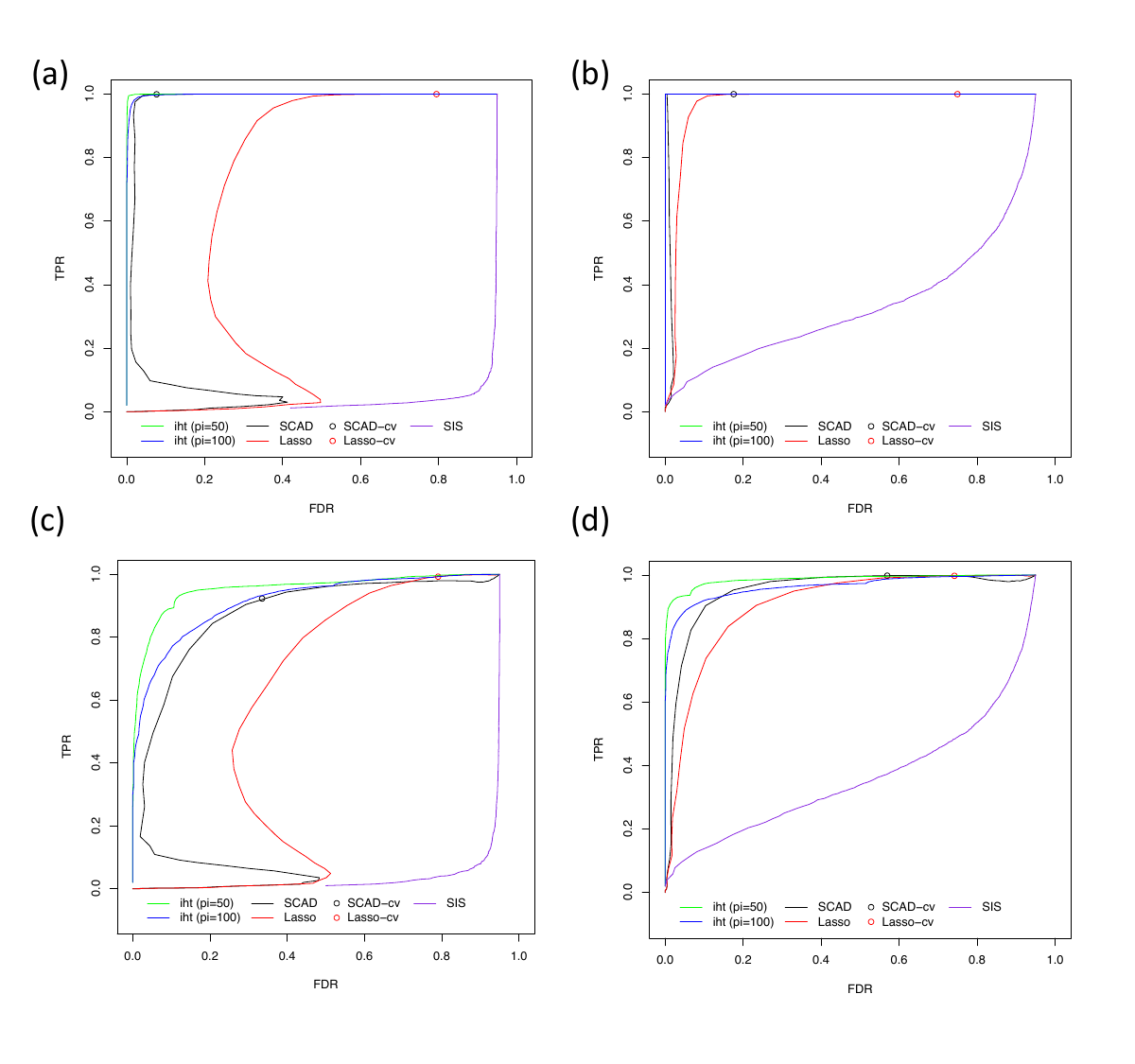}}
\caption{Setting3: Factor model. In the first and second rows, the noise level is set as $\sigma=0.3$ and $0.6$ respectively; The first and second columns correspond to $\bLambda_0=\diag(2p,p)$ and $\diag(2\sqrt{p},\sqrt{p})$ respectively. The black and red lines represent the TPR-FDR curves of $\widehat \bbeta ^ {\pen}$ as $\lambda$ varies, while the green and blue lines represent the TPR-FDR curves of $\{\widehat \bbeta^{\iht}(\widehat s)\}_{\widehat s=1}^p$ with projection size $\pi=50$ and $100$ respectively as $\hat s$ varies. The dots indicate the TPR and FDR of $\widehat \bbeta^{\pen}$ with $\lambda$ chosen by 10-fold cross validation. The blueviolet curve represents the TPR-FDR curve of SIS.}
\label{fm}
\end{figure}

\section{Real datasets} \label{sec:real}
\subsection{The Diabetes dataset}
We first consider the Diabetes Dataset that was studied in \cite{BTI04} and \cite{BKM16}. The response of interest is a quantitative measure of disease progression one year after baseline, and the predictors include ten baseline variables (age, sex, body-mass index, etc) as well as their quadratic terms. The total sample size $n=442$, and the dimension $p=64$. All feature columns are centered and normalized such that their $L_2$-norms are ones.

To compare LASSO, SCAD, SIS and IHT, we randomly divide the dataset into a training set ($80\%$ observations) and a testing set ($20\%$ observations). Then we apply these four algorithms to the training set with tuning parameters chosen by cross validation. We investigate the testing performance as well as the size of the trained model. For SIS, the tuning parameter is the number of features selected according to marginal correlation, and out-of-sample $R^2$ is calculated using the least squares refitted model on the top features.
%\begin{itemize}
%\item The model selected by each algorithm (model size, important features);
%\item The prediction performance of the trained model on the testing set.
%\end{itemize}
The results are shown in table \ref{table:diabetes}. As we can see, IHT selects a much sparser model than both LASSO and SCAD, while achieving a similar out-of-sample $R^2$ as LASSO and SCAD. Besides, IHT agrees with LASSO and SCAD on the most important features: \texttt{bmi} (body mass index), \texttt{ltg}, \texttt{map} (two blood serum measurements), \texttt{age.sex} (interaction between the variables \texttt{age} and \texttt{sex}), \texttt{hdl} (a blood serum measurement) and \texttt{sex} (sex). SIS obtains a worse $R^2$, and the top selected features are different.
%One reason might be that it cannot deal with highly correlated features.

\begin{table}
	\caption{Model selection and prediction of Lasso, SCAD, IHT and SIS on the Diabetes dataset \citep{BTI04}. The column ``$R^2$'' represents out-of-sample $R^2$ on the test dataset; The column ``Model Size'' represents the number of features selected by the trained model; the ``Most Significant Features'' shows the top 6 features corresponding entries with the highest $p$ values in the refitted coefficients. The meanings of the features shown here are explained in the main text.}	
 \centering
 \begin{tabular}{cccc}
 \hline\hline
  & $R ^ 2$ & Model Size & Most Significant Features (top 6) \\
 \hline\hline
 LASSO & 0.537 &14 & \texttt{bmi, ltg, map, age.sex, hdl, sex} \\
 \hline
 SCAD & 0.562 & 16 & \texttt{bmi, ltg, map, age.sex, hdl, sex} \\
 \hline
 IHT & 0.554 & 6 & \texttt{bmi, ltg, map, age.sex, hdl, sex} \\
  \hline
  SIS & 0.517 & 9 & \texttt{ltg, bmi, map, bmi$^2$, tc, glu} \\
  \hline
\end{tabular}

\label{table:diabetes}
\end{table}

Moreover, we assess all the four methods with additional artificial noise features. Specifically, we add $p_n$ (ranging from 100 to 500) noise features that are highly correlated with each other but independent of the original features. The noise features are Gaussian with mean 0 and covariance matrix $\bSigma_{p_n}=0.5 \bI_{p_n} +0.5 {\bfm 1}_{p_n}{\bfm 1}_{p_n}^\top\in \RR^{p_n\times p_n}$. All the features are standardized before being fed into the algorithms. After that, we randomly divide the dataset into a training set and a testing set as before. We then perform the variable selection procedures and examine the model size and the number of noise variables that are falsely selected. The results are shown in {Fig. \ref{fig-pn_modelsize}}.
%We observe that as the number of noise features increases, none of the algorithms yields significantly different out-of-sample $R^ 2$. \zzw{Where can I see the $R ^ 2$?}
We can see that as the number of noise features increases, LASSO and SCAD select larger models with more noise variables. In particular, when $p_n = 500$, around half of the features selected by LASSO are artificial noise features. In contrast, IHT always selects a small model with a tiny fraction of noise variables. SIS also selects a simple model with few noise features consistently, because the added noise features are independent of the original data. 

\begin{figure}[htb]
\centerline{\includegraphics[width=9cm]{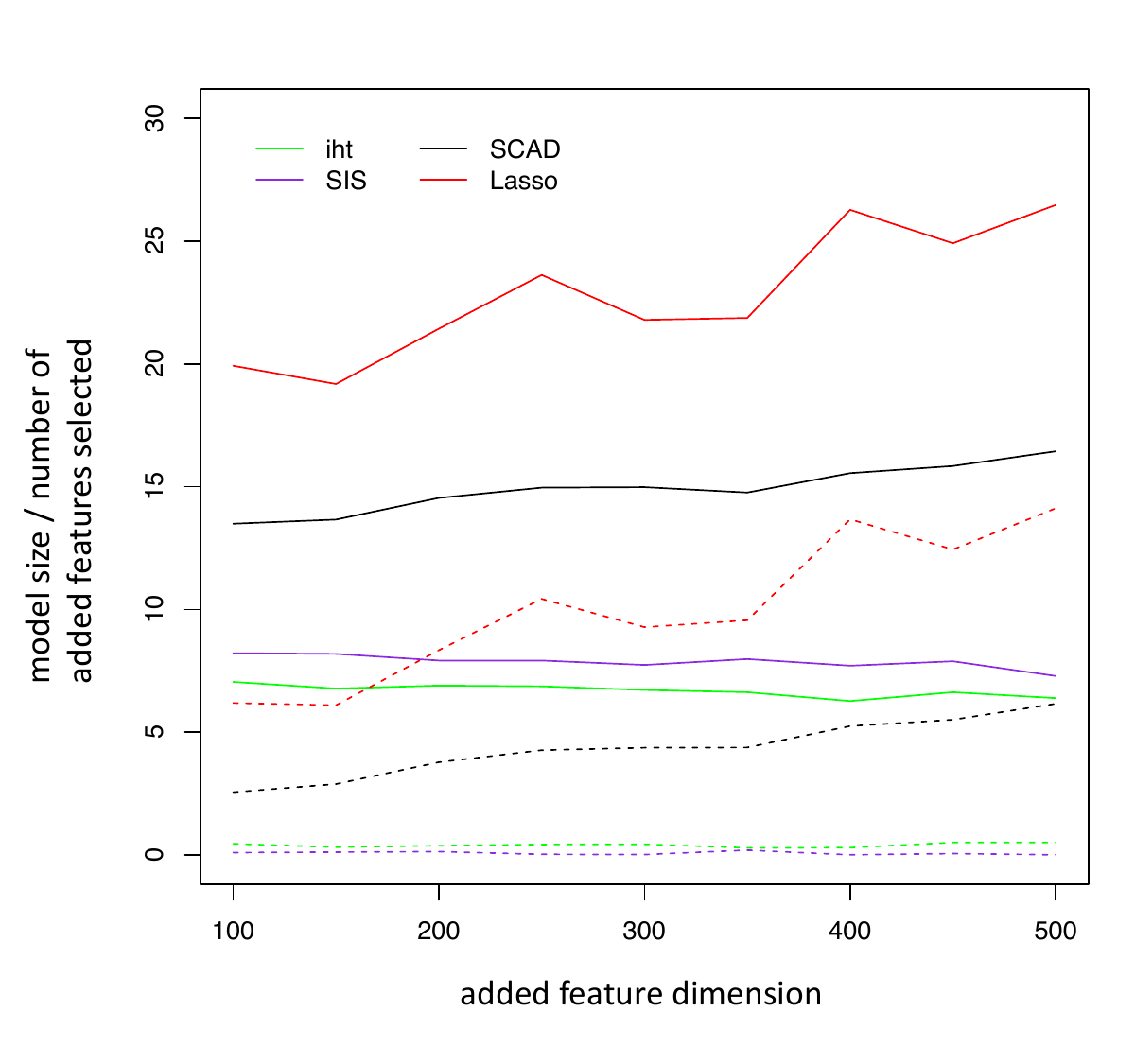}}
\caption{Model selection and prediction of Lasso, SCAD, IHT and SIS on the Diabetes dataset \citep{BTI04} with additional noise features. The noise features are added independently of the original features and follow $\mathcal N(0, \bSigma_{p_{n}})$, where $\bSigma_{p_n}=0.5 \bI_{p_n} +0.5 {\bfm 1}_{p_n}{\bfm 1}_{p_n}^\top\in \RR^{p_n\times p_n}$ has 1's in all its diagonal entries and 0.5's in its off-diagonal entries. All the algorithms are evaluated through 100 independent generations of random noise features with $p_n = 100, 150, \ldots, 500$.
The solid lines represent the average model size, while the dashed lines represent the number of artificial noise features that are mistakenly selected.}
\label{fig-pn_modelsize}
\end{figure}

\subsection{The Monthly Macroeconomic Dataset}
We now turn to a macroeconomic dataset extracted from the FRED-MD database \citep{MN16}. The dataset contains monthly observations of 129 macroeconomic variables covering aspects such as labor market, housing,  consumption, money and credit, interest and exchange rates, prices, stock market, etc. Our primary goal is to conduct association studies and find out how these variables are related to each other. In particular, we study how \emph{unemployment rate} and \emph{consumer price index} are associated with the other macroeconomic variables. Towards this end, we extract observations from January 1980 to November 2018 and use the last ten years' data as the testing data and the rest as the training data. For each target variable, in pursuit of a meaningful model, we delete the columns that are related with it in a striaghtforward and trivial manner. For instance, when predicting the unemployment rate, we delete the columns such as the number of civilians unemployed for fewer than 5 weeks, number of civilians unemployed for 5 to 14 weeks, number of civilians unemployed for 15 to 26 weeks, etc. Then, we apply the four algorithms assessed in the previous subsection with tuning parameters chosen by cross validation. As in the analysis of the Diabetes Dataset, we assess both the prediction performance and the size of the selected model.

Tables \ref{table:UNRATE} and \ref{table:CPIAUCSL} show the output model size, top five important features as well as the out-of-sample $R^2$ of the four methods when we predict the unemployment rate and CPI.  As different models yield different model sizes, to make more fair comparisons, we additionally report the out-of-sample $R^2$ of the refitted least squares model using the 10 most significant variables in each model (If model size is less than 10, then refitting is done using all selected variables). In both tasks, IHT achieves a similar $R ^ 2$ as LASSO and SCAD with a much smaller model. Moreover, the ten most significant features selected by IHT also achieve better prediction performance than those selected by the other methods. In addition, the performance of SIS is relatively unstable: while SIS has similar $R ^ 2$ as the other approaches in predicting CPI, it performs much more poorly than the others in predicting the unemployment rate. 
%when predicting the unemployment rate, SIS has significantly less $R^2$ than the other methods; when predicting CPI, SIS includes the features from the same sector that none of the other three algorithms regard as important. 
{To save the space in the main text, we relegate the meanings of the features to Section \ref{section:macro-var} in the Appendix.}

\begin{table}
\caption{\label{table:UNRATE}Model selection and prediction of Lasso, SCAD, IHT and SIS on the macroeconomic dataset \citep{MN16} for unemployment rate association studies. The column ``$R^2$'' represents out-of-sample $R^2$ on the test dataset; the column ``Model Size'' represents the number of features selected by the trained model; the column ``$R^2_{(10)}$'' represents the out-of-sample $R^2$ of the refitted least squares model using the 10 most important variables in each model (If model size is less than 10, then refitting is done using all selected variables; the ``Most Significant Features'' column gives the top 5 features corresponding to entries with the least p-values in the refitted coefficients. {To save the space in the main text, we relegate the meanings of the features to Section \ref{section:macro-var} in the Appendix.}}	
 \begin{tabular}{ccccc}
 \hline\hline
  & $R^2$ & Model Size & $R^2_{(10)}$& Most Significant Features (top 5) \\
 \hline\hline
 LASSO & 0.517 & 40 & 0.462 & \texttt{HWIURATIO, HWI, COMPAPFFx, M1SL, UEMPMEAN} \\
 \hline
 SCAD & 0.422 & 7 & 0.362 & \texttt{HWIURATIO, HWI, DMANEMP, PAYEMS, UEMPMEAN} \\
 \hline
  IHT & 0.470 & 4 & 0.470 & \texttt{HWIURATIO, HWI, DMANEMP, PAYEMS} \\
  \hline
  SIS & 0.171 & 10 & 0.171 & \texttt{HWIURATIO, IPDMAT, IPMANSICS, INDPRO, PAYEMS}\\
  \hline
\end{tabular}
\end{table}

\begin{table}
\caption{\label{table:CPIAUCSL}	Model selection and prediction of Lasso, SCAD, IHT and SIS on the macroeconomic dataset \citep{MN16} for CPI association studies. The column ``$R^2$'' represents out-of-sample $R^2$ on the test dataset; the column ``Model Size'' represents the number of features selected by the trained model; the column ``$R^2_{(10)}$'' represents the out-of-sample $R^2$ of the refitted least squares model using the 10 most important variables in each model (If model size is less than 10, then refitting is done using all selected variables); the ``Most Significant Features'' shows the top 5 features corresponding entries with the highest $p$ values in the refitted coefficients. The meanings of the features shown here are explained in the main text.}
\centering
 \begin{tabular}{ccccp{.5\textwidth}}
 \hline\hline
  & $R^2$ & Model Size &$R^2_{(10)}$& \hspace{1cm} Most Significant Features (top 5) \\
 \hline\hline
 LASSO & 0.902 & 20 &0.876& \texttt{DNDGRG3M086SBEA, PCEPI, FEDFUNDS, NDMANEMP, BUSINVx} \\
 \hline
 SCAD & 0.909 & 15 &0.891& \texttt{DNDGRG3M086SBEA, PCEPI, FEDFUNDS, NDMANEMP, WPSID61} \\
 \hline
 IHT & 0.905 & 2& 0.905 & \texttt{DNDGRG3M086SBEA, PCEPI} \\
  \hline
 SIS & 0.903 & 6 &0.903& \texttt{DNDGRG3M086SBEA, PCEPI, WPSID61, WPSID62, WPSFD49207} \\
  \hline
\end{tabular}

\end{table}

Similarly to the diabetes dataset, we further explore the variable selection properties of all the four algorithms by incorporating noise features. Specifically, we generate $p_n$ ($p_n=10, 30, 50$) spurious Gaussian features with mean 0 and covariance matrix $\bSigma_{p_n}=0.5 \bI_{p_n} +0.5  {\bfm 1}_{p_n}{\bfm 1}_{p_n}^\top$ (independently of the original features). All features are standardized before being fed into the algorithms. Then, after randomly dividing the dataset into a training set and a testing set, we apply all the four algorithms and examine their out-of-sample $R^2$, model size, and number of noise variables that are selected into the model. The results are shown in Tables \ref{table:UNRATE_noise} and \ref{table:CPI_noise}. We can observe that as the number of noise variables increases, the out-of-sample $R^2$ and the model size remain stable for all the algorithms. Nevertheless, IHT and SIS consistently select very few noise variables, while LASSO and SCAD select an increasing number of noise variables. 
In particular, IHT and SIS never select any noise variables in the CPI association study. 
This in turn suggests that LASSO might select quite a number of spurious variables in the earlier study without artificial noise variables.

\begin{table}
\caption{\label{table:UNRATE_noise}Model selection and prediction of Lasso, SCAD, IHT and SIS on the macroeconomic dataset \cite{MN16} with additional noise features for unemployment rate association studies. The noise features are added independently of the original features, and are generated with the distribution $\mathcal N(0, \bSigma_{p_{n}})$, where $\bSigma_{p_n}=0.5\bI_{p_n} +0.5 {\bfm 1}_{p_n}{\bfm 1}_{p_n}^\top\in \RR^{p_n\times p_n}$ has 1 in all its diagonal entries and 0.5 in its off-diagonal entries. All algorithms are evaluated through 100 independent generation of random noise features with $p_n = 10, 30, 50$. The column ``$R^2$'' represents the averaged out-of-sample $R^2$ on the test dataset; The column ``Model Size'' represents the average number of features selected by the trained model; The ``Noise Variables Selected'' column gives the average number of noise features that are selected into the model. The associated standard errors are put in the subscript.}

	\centering
 \begin{tabular}{ccccc}
 \hline\hline
  & & $R ^ 2$ & Model Size & Noise Variables Selected \\
 \hline\hline
 &LASSO & $0.495_{(0.003)}$ &$49.750_{(1.106)}$ & $3.840_{(0.234)}$\\
 $p_{n}=10$&SCAD &$0.424_{(0.002)}$ &$11.310_{(0.340)}$ & $0.550_{(0.073)}$\\
 &IHT & $0.497_{(0.004)}$ & $6.880_{(0.167)}$ & $0.040_{(0.032)}$ \\
  &SIS & $0.148_{(0.003)}$ & $8.360_{(0.198)}$ & $0_{(0)}$ \\
 \hline
 &LASSO & $0.550_{(0.003)}$ &$47.820_{(1.360)}$ & $7.610_{(0.488)}$\\
 $p_{n}=30$&SCAD &$0.426_{(0.001)}$ &$10.960_{(0.283)}$ & $0.830_{(0.102)}$\\
 &IHT & $0.479_{(0.006)}$ & $6.320_{(0.109)}$ & $0.040_{(0.024)}$ \\
 &SIS & $0.145_{(0.003)}$ & $8.080_{(0.201)}$ & $0_{(0)}$ \\
 \hline
&LASSO & $0.510_{(0.003)}$ &$49.630_{(1.679)}$ & $11.850_{(0.806)}$\\
 $p_{n}=50$&SCAD &$0.425_{(0.002)}$ &$12.730_{(0.433)}$ & $1.910_{(0.180)}$\\
 &IHT & $0.468_{(0.007)}$ & $6.120_{(0.069)}$ & $0.040_{(0.020)}$ \\
 &SIS & $0.150_{(0.003)}$ & $8.480_{(0.195)}$ & $0_{(0)}$ \\
 \hline
\end{tabular}
\end{table}

\begin{table}
\caption{\label{table:CPI_noise}Model selection and prediction of Lasso, SCAD, IHT and SIS on the macroeconomic dataset \cite{MN16} with additional noise features for CPI association studies. The noise features are added independently of the original features, and are generated with the distribution $\mathcal N(0, \bSigma_{p_{n}})$, where $\bSigma_{p_n}=0.5 \cdot \bI_{p_n} +0.5 \cdot {\bfm 1}_{p_n}{\bfm 1}_{p_n}^\top\in \RR^{p_n\times p_n}$ has 1 in all its diagonal entries and 0.5 in its off-diagonal entries. All algorithms are evaluated through 100 independent generation of random noise features with $p_n = 10, 30, 50$. The column ``$R^2$'' represents the averaged out-of-sample $R^2$ on the test dataset; The column ``Model Size'' represents the average number of features selected by the trained model; The ``Noise Variables Selected'' column gives the average number of noise features that are selected into the model. The associated standard errors are put in the subscript.}

	\centering
 \begin{tabular}{ccccc}
 \hline\hline
  & & $R ^ 2$ & Model Size & Noise Variables Selected \\
 \hline\hline
 &LASSO & $0.904_{(5e-4)}$ &$20.900_{(0.486)}$ & $1.400_{(0.130)}$\\
 $p_{n}=10$&SCAD &$0.908_{(1e-4)}$ &$13.670_{(0.313)}$ & $0.370_{(0.065)}$\\
 &IHT & $0.905_{(0)}$ & $2.000_{(0)}$ & $0_{(0)}$ \\
 &SIS & $0.903_{(0)}$ & $6.000_{(0)}$ & $0_{(0)}$ \\
 \hline
 &LASSO & $0.905_{(6e-4)}$ & $20.870_{(0.668)}$ & $2.760_{(0.259)}$\\
 $p_{n}=30$&SCAD &$0.908_{(2e-4)}$ &$15.030_{(0.268)}$ & $1.210_{(0.109)}$\\
 &IHT & $0.905_{(0)}$ & $2.000_{(0)}$ & $0_{(0)}$ \\
 &SIS & $0.903_{(0)}$ & $6.000_{(0)}$ & $0_{(0)}$ \\
 \hline
&LASSO & $0.905_{(7e-4)}$ &$22.120_{(0.730)}$ & $3.640_{(0.331)}$\\
 $p_{n}=50$&SCAD &$0.909_{(1e-4)}$ &$13.710_{(0.334)}$ & $1.050_{(0.103)}$\\
 &IHT & $0.905_{(0)}$ & $2.000_{(0)}$ & $0_{(0)}$ \\
 &SIS & $0.903_{(0)}$ & $6.000_{(0)}$ & $0_{(0)}$ \\
 \hline
\end{tabular}

\end{table}\textit{}

To summarize, IHT yields outstanding performance on both real datasets. Compared with LASSO and SCAD, IHT yields a much simpler model and is more robust to spurious artificial features, while achieving similar out-of-sample $R ^ 2$. Compared with SIS, IHT achieves a much higher out-of-sample $R^2$.

\bibliographystyle{ims}
\bibliography{./bib}

%%%%%%%%%%%%%%%%%%%%%%%%%%%%%%%%%%%%%%%%%%%%

\newpage
\bigskip
\begin{center}
{\large\bf SUPPLEMENTARY MATERIAL}
\end{center}

\section{Proof of main theorems}

\subsection{Proof of Theorem \ref{thm-selection-consistency}}
For $t\in \{1,\cdots, s\}$, let $\cA_t :=\{\cS\subset [p]: |\cS|=s, |\cS\setminus\cS^*|=t\}$ (i.e., the set of the sets that have exactly $t$ different elements compared with $\cS^*$). Then we have $\cA(s)=\cup_{t\in[s]}\cA_t$.

Now we fix $t\in[s]$. For any $\cS\in\cA_t$, define $\cS_0:=\cS^*\setminus\cS$. Note that
\begin{equation}
\begin{aligned}
& n^{-1}(R_{\cS} - R_{\cS^*})  = n^{-1} \bigl\{\by^\top(\bI-\bP_{\bX_{\cS}})\by - \by^\top(\bI-\bP_{\bX_{\cS^*}})\by\bigr\}  \\
& = n^{-1}\bigl\{(\bX_{\cS_0}\bbeta^*_{\cS_0}+\bepsilon)^\top(\bI-\bP_{\bX_{\cS}})(\bX_{\cS_0}\bbeta^*_{\cS_0}+\bepsilon) - \bepsilon^\top(\bI-\bP_{\bX_{\cS^*}})\bepsilon\bigr\} \\
& = \bbeta^{*\top}_{\cS_0}\hat\bD(\cS)\bbeta^{*}_{\cS_0} + 2n^{-1}\bepsilon^\top(\bI-\bP_{\bX_{\cS}})\bX_{\cS_0}\bbeta^*_{\cS_0} - n^{-1} \bepsilon^\top(\bP_{\bX_{\cS}}-\bP_{\bX_{\cS^*}})\bepsilon \\
& = \eta \bbeta^{*\top}_{\cS_0}\hat\bD(\cS)\bbeta^{*}_{\cS_0} + 2^{-1}(1 - \eta)\bbeta^{*\top}_{\cS_0}\hat\bD(\cS)\bbeta^{*}_{\cS_0} + 2n^{-1}\bepsilon^\top(\bI-\bP_{\bX_{\cS}})\bX_{\cS_0}\bbeta^*_{\cS_0} \\
& \qquad + 2^{-1}(1 - \eta)\bbeta^{*\top}_{\cS_0}\hat\bD(\cS)\bbeta^{*}_{\cS_0} -  n^{-1} \bepsilon^\top(\bP_{\bX_{\cS}}-\bP_{\bX_{\cS^*}})\bepsilon.
\end{aligned}
\end{equation}
In the sequel, we show that the following two inequalities hold with high probability:
\begin{align}
& \left| 2n^{-1}\bigl\{(\bI-\bP_{\bX_{\cS}})\bX_{\cS_0}\bbeta^*_{\cS_0}\bigr\}^\top\bepsilon\right|<2^{-1}(1 -  \eta)\bbeta^{*\top}_{\cS_0}\hat\bD(\cS)\bbeta^{*}_{\cS_0}, \label{eq:thm2.1_t1}\\
& n^{-1}\bepsilon^\top(\bP_{\bX_{\cS}}-\bP_{\bX_{\cS^*}})\bepsilon<2^{-1}(1 - \eta )\bbeta^{*\top}_{\cS_0}\hat\bD(\cS)\bbeta^{*}_{\cS_0},  \label{eq:thm2.1_t2}
\end{align}
so that $n^{-1}(R_{\cS} - R_{\cS^*}) > \eta \bbeta^{*\top}_{\cS_0}\hat\bD(\cS)\bbeta^{*}_{\cS_0}$.

First, define
\[
\bgamma_{\cS}:=n^{- 1 / 2} (\bI-\bP_{\bX_{\cS}})\bX_{\cS_0}\bbeta^*_{\cS_0}.
\]
Then $\lVert{\bgamma_{\cS}}\rVert_2^2 = \bbeta^{*\top}_{\cS_0}\hat\bD(\cS)\bbeta^{*}_{\cS_0}$, and \eqref{eq:thm2.1_t1} is equivalent to
\begin{equation}
\lvert \bgamma_{\cS}^\top\bepsilon\rvert/ \lVert \bgamma_{\cS}\rVert_2 \le \frac{(1 - \eta)n^{1 / 2}}{4}\lVert \bgamma_{\cS}\rVert_2.
\end{equation}
Given that all the entries of $\bepsilon$ are i.i.d. sub-Gaussian with $\psi_2$-norm bounded by $\sigma$, applying Hoeffding's inequality yields that for any $x > 0$,
$$
\P(\lvert \bgamma_{\cS}^\top\bepsilon\rvert/ \lVert \bgamma_{\cS}\rVert_2>\sigma x)\leq 2e^{-x^2/2}.
$$
Define $\hat M_t:=\sup_{\cS\in\cA_t} \lvert \bgamma_{\cS}^\top\bepsilon\rvert/ \lVert \bgamma_{\cS}\rVert_2.$
Then a union bound over all $\cS\in \cA_t$ yields that for any $\xi > 0$,
\[
\P(\hat M_t> \xi \sigma\sqrt{t\log p})\leq 2\lvert\cA_t\rvert e^{-(\xi ^2 t\log p)/ 2}=
\begin{pmatrix}
p-s\\t
\end{pmatrix}
\begin{pmatrix}
s\\t
\end{pmatrix}
2e^{-(\xi ^ 2t\log p) / 2}
\leq 2e^{-(\xi ^2 / 2 - 2)t\log p}.
\]
%	Note that
%	\begin{equation}
%		\label{eq:gamma_s}
%		\ltwonorm{\bgamma_{\cS}} \ge \widehat \lambda_m^ {1 / 2} \ltwonorm{\bbeta^*} \ge (t\widehat \lambda_m)^{1 / 2} \mu_*.
%	\end{equation}
%	Hence, whenever
%	\[
%	\mu_* \ge \frac{4\xi \sigma}{1 - \eta}\biggl(\frac{\log p}{ n\widehat \lambda_m}\biggr)^{1 / 2},
%	\]
Therefore, whenever
\[
\frac{\inf_{\cS \in \cA_t}\|\bgamma_{\cS}\|_2}{t ^ {1 / 2}} \ge \frac{4\xi\sigma}{1 - \eta}\biggl(\frac{\log p}{n}\biggr)^{1 / 2},
\]
we have that
\[
\P\biggl(\widehat M_t > \frac{(1 - \eta)n^{1 / 2}}{4} \inf_{\cS \in \cA_t}\ltwonorm{\bgamma_{\cS}}\biggr) \le 2 e^{- (\xi ^ 2 - 2)t\log p},
\]
which implies that
\beq
\label{eq:thm2.1_t1_bound}
\P\biggl( \exists \cS \in \cA_t, \frac{\lvert \bgamma_{\cS}^\top\bepsilon\rvert}{\lVert \bgamma_{\cS}\rVert_2} > \frac{(1 - \eta)n^{1 / 2}}{4}\lVert \bgamma_{\cS}\rVert_2\biggr) \le 2e^{-(\xi^2 - 2)t\log p}.
\eeq

As for \eqref{eq:thm2.1_t2}, define
\[
\hat\delta_t:=\max_{\cS\in\cA_t}\frac{1}{n}\bepsilon^\top(\bP_{\bX_{\cS}}-\bP_{\bX_{\cS^*}})\bepsilon.
\]
%and define the events
%\[
%E_t^1:=\bigl\{ \hat M_t< (nt\hat\lambda_t^m)^{1 / 2} \min_{j\in \cS^*}\lvert \beta_j^*\rvert / 4\bigr\}~~~\text{and}~~~ E_t^2:=\bigl\{ \hat\delta_t<t \hat\lambda_t^m\min_{j\in \cS^*}\lvert \beta_j^*\rvert^2 /  2\bigr\}.
%\]
% According to the above deduction, for each $t\in\{1,\cdots, s\}$, $E_t\supseteq E_t^1\cap E_t^2$. In the following, we show that each of $E_t^1$ and $E_t^2$ holds with high probability, thus ensuring that $E$ holds with high probability.
%
%We first consider $E_t^1$. Fix $\cS\in\cA_t$.
%Let $x=C_1\sqrt{t\log p}$. Then we obtain that
%$$
%\P(\lvert \bgamma_{\cS}^\top\bepsilon\rvert/ \lVert \bgamma_{\cS}\rVert_2>C_1\sigma \sqrt{t\log p})\leq 2e^{-C_2t\log p},
%$$
%where $C_2=C_1^2/2$.  With a suitably large choice of $C_1$, we have $C_3>0$. Therefore, as long as $\min_{j\in\cS^*}\lvert\beta_j^*\rvert\geq C_4\sqrt{\sigma^2/\hat\lambda_m}\cdot\sqrt{\log p/n}$ where $C_4=4C_1$, for every $t \in [s]$,
%\begin{equation}\label{eq-E1}
%\P\left((E_1^t)^c\right)\leq 2e^{-C_3t\log p}.
%\end{equation}
%Now we analyze $E_t^2 =\bigl\{\hat\delta_t<t \hat\lambda_t^m\min_{j\in \cS}\lvert \beta_j^*\rvert^2 / 2\bigr\}$.
Fix any $\cS\in\cA_t$, let $\cU$, $\cV$ be the orthogonal complement of $\cW:=\operatorname{colspan}(\bX_{\cS^*\cap\cS})$ as a subspace of $\operatorname{colspan}(\bX_{\cS})$ and $\operatorname{colspan}(\bX_{\cS^*})$ respectively. Then $\dim(\cU)=\dim(\cV)=t$, and
\begin{align}\label{eq3}
\frac{1}{n}\bepsilon^\top(\bP_{\bX_{\cS}}-\bP_{\bX_{\cS^*}})\bepsilon
&=\frac{1}{n}\bepsilon^\top(\bP_{\cW}+\bP_{\cU})\bepsilon-\frac{1}{n}\bepsilon^\top(\bP_{\cW}+\bP_{\cV})\bepsilon\nonumber\\
&=\frac{1}{n}\bepsilon^\top(\bP_{\cU}-\bP_{\cV})\bepsilon.
\end{align}
By \cite[][Theorem~1.1]{RV13}, there exists a universal constant $c>0$ such that for any $x>0$,
$$
\P(\lvert\bepsilon^\top\bP_{\cU}\bepsilon-\E\bepsilon^\top\bP_{\cU}\bepsilon\rvert>\sigma^2x)\leq 2e^{-c\min({x^2}/{\fnorm{\bP_{\cU}}^2},\hspace{.03cm} {x}/{\ltwonorm{\bP_\cU}})}=2e^{-c\min(x^2/t,\hspace{.03cm}x)}.
$$
Similarly,
$$
\P(\lvert\bepsilon^\top\bP_{\cV}\bepsilon-\E\bepsilon^\top\bP_{\cV}\bepsilon\rvert>\sigma^2x)\leq 2e^{-c\min(x^2/t, x)}.
$$
Noticing that $\E(\bepsilon^\top\bP_{\cV}\bepsilon) = \E\tr(\bP_{\cV}\bepsilon\bepsilon^\top)=\Var(\epsilon_1)\tr(\bP_{\cV})=t\Var(\epsilon_1) = \E(\bepsilon^\top\bP_{\cU}\bepsilon)$,
we combine the above two inequalities and obtain that
$$
\P(\lvert\bepsilon^\top\bP_{\cU}\bepsilon-\bepsilon^\top\bP_{\cV}\bepsilon\rvert>2\sigma^2 x )\leq 4e^{-c\min(x^2/t, x)}.
$$
%Now let $x=C_5t\log p$ with some suitably large constant $C_5$.
Given that $\log p > 1$ and that \eqref{eq3} holds,  applying a union bound over  $\cS\in\cA_t$ yields that for any $\xi > 1$,
%\begin{align*}
%\P\biggl(\frac{1}{n}\lvert\bepsilon^\top(\bP_{\bX_{\cS}}-\bP_{\bX_{\cS^*}})\bepsilon\rvert>\frac{2C_5\sigma^2t\log p}{n}\biggr)\leq 4e^{-C_6t\log p},
%\end{align*}
\begin{align}\label{eq4}
\P\left(\hat\delta_t>\frac{2\xi \sigma^2t\log p}{n}\right)\leq  4\lvert\cA_t\rvert e^{-c \xi t\log p}=
\begin{pmatrix}
p-s\\t
\end{pmatrix}
\begin{pmatrix}
s\\t
\end{pmatrix}
4e^{- c\xi t\log p}\leq 4e^{-(c\xi - 2) t\log p},
\end{align}
Therefore, whenever
\[
\frac{\inf_{\cS \in \cA_t}\|\bgamma_{\cS}\|_2}{t ^ {1 / 2}}  \ge \biggl (\frac{4\xi\sigma ^ 2\log p}{n (1 - \eta)}\biggr)^{1 / 2},
\]
%	\[
%	\mu_* > 2\sigma \biggl\{\frac{\xi \log p}{n(1 - \eta)\widehat \lambda_m}\biggr\}^{1 / 2},
%	\]
we have that
\[
\P\biggl(\widehat \delta_t > \min_{\cS \in \cA_t} \frac{1 - \eta}{2} \|\bgamma_{\cS}\|_2^2 \biggr) \le 4e^{- (c\xi - 2)t\log p},
\]
which further implies that
\begin{equation}
\label{eq:thm2.1_t2_bound}
\P\biggl(\exists \cS \in \cA_t, \frac{1}{n} \bepsilon^\top(\bP_{\bX_\cS} - \bP_{\bX_{\cS^*}})\bepsilon \ge \frac{1 - \eta}{2}\ltwonorm{\bgamma_{\cS}}^2\biggr) \le 4e^{-(c\xi - 2)t\log p}.
\end{equation}

Finally, combining \eqref{eq:thm2.1_t1_bound} and \eqref{eq:thm2.1_t2_bound} and  applying a union bound with $t \in [s]$, we deduce that for any $\xi > \max(1, 2c^{-1})$ and $0< \eta < 1$, if
\[
\inf_{\cS \in \cA(s)}\frac{\|\bgamma_{\cS}\|_2}{t ^ {1 / 2}} \ge \frac{4\xi\sigma}{1 - \eta}\biggl(\frac{\log p}{n}\biggr)^{1 / 2},
\]
then given that $\tau_*(s) \le \bbeta^{*\top}_{\cS_0}\hat\bD(\cS)\bbeta^{*}_{\cS_0}$ for any $\cS \in \cA(s)$, we have that
\[
\P\biggl(\forall \cS \in \cA, R_{\cS} - R_{\cS^*} >  n\eta\tau_*(s) \biggr) \ge 1 - 4s\bigl\{p^{- (c\xi - 2)} + p^{-(\xi - 2)}\bigr\},
\]
as desired.
%where we write $C_6 = c\min\{C_5, C_5^2\}$ and $C_7=C_6-2$. With suitably large choice of $C_5$, we have $C_7>0$. Therefore, as long as $\min_{j\in\cS^*}\lvert\beta_j^*\rvert\geq C_8 \sqrt{\sigma^2/\hat\lambda_m}\cdot\sqrt{\log p/n}$ with $C_8=2\sqrt{C_5}$, for every $t\in \{1,2,\cdots, s\}$,
%\begin{equation}\label{eq-E2}
%\P\left((E_t^2)^c\right)\leq 4e^{-C_7t\log p}.
%\end{equation}
%
%
%Finally, we conclude from (\ref{eq-E1}) and (\ref{eq-E2}) that, as long as $\min_{j\in\cS^*}\lvert\beta_j^*\rvert\geq C_9\sqrt{\sigma^2/\hat\lambda_m}\cdot\sqrt{\log p/n}$ with $C_9=\max\{C_4, C_8\}$,
%\begin{align*}
%\P(E^c)&= \P(\cup_{t\in[s]}E_t^c)\leq \sum_{t=1}^s\P(E_t^c)\leq \sum_{t=1}^s \left(\P((E_t^1)^c)+\P((E_t^2)^c)\right)\\
%&\leq \sum_{t=1}^s\left(2e^{-C_3t\log p}+4e^{-C_7t\log p}\right)\leq 6se^{-C_{10}t\log p}.
%\end{align*}
%Here $C_{10}=\min\{C_3, C_7\}$.
%(Combining all the relationships between the constants, we actually obtain that for any $C_{10}>0$, by choosing $C_9=\max\{4\sqrt{2C_{10}+2}, \sqrt{4(C_{10}+2)/c}+1, 2[(C_{10}+2)/c]^{1/4}+1\}$, the above inequality holds with desired probability.)

\subsection{Proof of Theorem \ref{thm:lower_bound}}
For any set $\cS \in \cC_{j_0}$, note that $\widehat \bD(\cS)$ is now reduced to be a scalar, which we thus use $\widehat D(\cS)$ to denote. We have that
\begin{align}\label{eq:lower_bound_loss_comparison}
n^{-1}(R_{\cS} - R_{\cS^*}) & = n^{-1} \bigl\{\by^\top(\bI-\bP_{\bX_{\cS}})\by - \by^\top(\bI-\bP_{\bX_{\cS^*}})\by\bigr\} \nonumber \\
& = \widehat D(\cS){\beta^{*}_{j_0}}^2  + 2n^{-1}\beta^*_{j_0} \bepsilon^\top(\bI-\bP_{\bX_{\cS}})\bX_{j_0} - n^{-1} \bepsilon^\top(\bP_{\bX_{\cS}}-\bP_{\bX_{\cS^*}})\bepsilon.
\end{align}
We first provide a lower bound on $\sup_{\cS\in\cC_{j_0}}n^{-1} \bepsilon^\top(\bP_{\bX_{\cS}}-\bP_{\bX_{\cS^*}})\bepsilon$. Recall that $\cS^*_0 = \cS^* \setminus \{j_0\}$, and that for any $j \in [p] \setminus \cS^*$, $\widetilde \bu_j:=(\bI-\bP_{\bX_{\cS^*_0}})\bX_j$ and $\overline \bu_j := \widetilde \bu_j / \ltwonorm{\widetilde\bu_j}$. We have that
\begin{equation}\label{eq:lower_bound_quadratic}
\sup_{\cS\in\cC_{j_0}}\frac{1}{n} \bepsilon^\top(\bP_{\bX_{\cS}}- \bP_{\bX_{\cS^*}})\bepsilon = \sup_{j\notin\cS^*}\frac{1}{n}\bepsilon^\top(\overline\bu_j\overline\bu_j^\top-\overline\bu_{j_0}\overline\bu_{j_0}^\top)\bepsilon.
\end{equation}
We start with a lower bound of the expectation of the above term. By the Cauchy--Schwarz inequality,
\begin{align*}
\E\biggl\{\sup_{j\notin\cS^*}\frac{1}{n}\bepsilon^\top(\overline\bu_j\overline\bu_j^\top-\overline\bu_{j_0}\overline\bu_{j_0}^\top)\bepsilon\biggr\} = \E\biggl\{\sup_{j\notin\cS^*}\frac{1}{n}(\overline\bu_j^\top\bepsilon)^2\biggr\}-\frac{\sigma^2}{n} \geq \frac{1}{n}\Bigl\{\E \sup_{j\notin\cS^*}(\overline\bu_j^\top\bepsilon)\Bigr\}^2-\frac{\sigma^2}{n}.
\end{align*}
By Sudakov's lower bound on Gaussian processes,
$$
\E \sup_{j\notin\cS^*}(\overline\bu_j^\top\bepsilon)\geq \sup_{\delta > 0}\frac{\delta}{2}\bigl\{\log M(\delta, \{\overline\bu_j\}_{j\notin\cS^*})\bigr\} ^ {1 / 2}.
$$
Combining the two inequalities gives
\begin{equation}\label{eq:lower_bound_quadratic_1}
\begin{aligned}
\E \sup_{j\notin\cS^*}\frac{1}{n}\bepsilon^\top(\overline\bu_j\overline \bu_j^\top-\overline\bu_{j_0}\overline\bu_{j_0}^\top)\bepsilon & \geq   \frac{\sigma^2}{n}\biggl\{\sup_{\delta>0}\frac{\delta^2}{4}\log M(\delta, \{\overline\bu_j\}_{j\notin\cS^*}) - 1\biggr\} \\
& \ge \frac{\sigma ^ 2}{n} \biggl(\frac{\delta_0^2 c_{\delta_0}}{4}\log p - 1\biggr).
\end{aligned}
\end{equation}
Now we bound the variance of $\sup_{j\notin\cS^*}\frac{1}{n}\bepsilon^\top(\overline\bu_j\overline\bu_j^\top-\overline\bu_{j_0}\overline\bu_{j_0}^\top)\bepsilon$. We have that
\begin{equation}
\label{eq:lower_bound_quadratic_2}
\begin{aligned}
\Var\biggl\{\sup_{j\notin\cS^*}\frac{1}{n}\bepsilon^\top(\overline\bu_j\overline\bu_j^\top-\overline\bu_{j_0}\overline\bu_{j_0}^\top)\bepsilon\biggr\} & = \frac{1}{n^2}\Var\biggl\{\sup_{j\notin \cS^*}(\overline\bu_j^\top\bepsilon)^2-(\overline\bu_{j_0}^\top\bepsilon)^2\biggr\} \\
& \leq \frac{2}{n ^ 2}\biggl[\Var\biggl\{\sup_{j\notin \cS^*}(\overline\bu_j^\top\bepsilon)^2\biggr\} + \Var\{(\overline \bu_0 ^\top \bepsilon)^ 2\}\biggr] \\
& = \frac{2}{n ^ 2}\biggl[\Var\biggl\{\sup_{j\notin \cS^*}(\overline\bu_j^\top\bepsilon)^2\biggr\} + 2\sigma^4 \biggr].
\end{aligned}
\end{equation}
According to Lemma \ref{lem:var_of_max},
\begin{equation}\label{eq:lower_bound_quadratic_22}
\Var\biggl\{\sup_{j\notin \cS^*}(\overline\bu_j^\top\bepsilon)^2\biggr\}=\frac{2}{n^2}\Var\biggl\{\max\biggl(\sup_{j\notin \cS^*}\overline\bu_j^\top\bepsilon, \sup_{j
	\notin \cS^*} - \overline\bu_j^\top\bepsilon \biggr)^2\biggr\}\leq \frac{4}{n^2}\Var(Z^2),
\end{equation}
where $Z :=\sup_{j\notin \cS^*}\overline \bu_j^\top\bepsilon$. Besides,
\begin{equation}
\label{eq:var_z2}
\begin{aligned}
\Var(Z^2)& = \Var\bigl\{ (Z - \E Z) ^ 2 + 2(\E Z)Z - (\E Z) ^ 2 \bigr\} = \Var \bigl\{ (Z - \E Z) ^ 2 + 2 (\E Z)Z \bigr\}\\
& \le 2 \Var\bigl\{ (Z - \E Z) ^ 2\bigr\} + 8 (\E Z)^2 \Var(Z) \\
& = 2 \E \bigl\{ (Z - \E Z) ^ 4\bigr\} - 2\Var(Z)^2 + 8 (\E Z)^2 \Var(Z).
\end{aligned}
\end{equation}
According to Lemma \ref{lem:concentration_sup_of_gaussian}, $Z$ is $\sigma^2$-subgaussian. Hence, for any $q\geq 1$, $\left(\E |(Z - \E Z)/\sigma|^q\right)^{1/q}\lesssim \sqrt{q}$. Therefore, $\Var(Z^2)\lesssim \sigma ^ 4 + \sigma ^ 2(\E Z) ^ 2$. In addition, by \cite[][Corollary~5.25]{van16}, we have that
\[
	\E Z \le 12\sigma \int_{\delta = 0}^{\infty} \{\log N(\delta, \{\overline \bu_j\}_{j \notin \cS^*})\}^{1 / 2} d\delta,
\]
where $N(\delta, \{\overline \bu_j\}_{j \notin \cS^*})$ is the $\delta$-covering number of $\{\overline \bu_j\}_{j \notin \cS^*}$ under Euclidean distance. Given that $N(\delta, \{\overline\bu_j\}_{j \in \cS^*}) = 1$ for any $\delta > 2^{1 / 2}$, we deduce that $\E Z \lesssim \sigma(\log p) ^ {1 / 2}$. Therefore, $\Var(Z ^ 2) \lesssim \sigma ^ 4 \log p$. Combining this bound with \eqref{eq:lower_bound_quadratic_2}, \eqref{eq:lower_bound_quadratic_22} and \eqref{eq:var_z2} yields that there exists a universal constant $C_1 > 0$ such that
\[
\Var\biggl\{\sup_{j\notin\cS^*}\frac{1}{n}\bepsilon^\top(\overline\bu_j\overline\bu_j^\top-\overline\bu_{j_0}\overline\bu_{j_0}^\top)\bepsilon\biggr\}  \le \frac{C_1 \sigma^4 \log p}{n ^ 2}.
\]
Finally, by Markov's inequality, for any $t > 0$, we have that
\[
\PP \biggl\{\sup_{\cS\in\cA_1}\frac{1}{n} \bepsilon^\top(\bP_{\bX_{\cS}}- \bP_{\bX_{\cS^*}})\bepsilon \le  \frac{\sigma ^ 2}{n} \biggl(\frac{\delta_0^2 c_{\delta_0}}{4}\log p - 1\biggr) - \frac{t\sigma ^ 2 (C_1 \log p) ^ {1 / 2}}{n}\biggr\} \le t^{-2},
\]
from which we further deduce that if $\log p > 5 / (\delta_0^2 c_{\delta_0})$, then there exists $C_2(\delta_0) > 0$ such that
\begin{equation}
\label{eq:lower_bound_quadratic_final}
\PP \biggl\{\sup_{\cS\in\cA_1}\frac{1}{n} \bepsilon^\top(\bP_{\bX_{\cS}}- \bP_{\bX_{\cS^*}})\bepsilon \le \frac{\sigma^2 \delta_0^2 c_{\delta_0} \log p}{21n} \biggr\}	\le \frac{C_2(\delta_0)}{\log p}.
\end{equation}

Now we proceed to give an upper bound of the second term on the right hand side of (\ref{eq:lower_bound_loss_comparison}), i.e., $2n^{-1}\beta^*_{j_0}\bepsilon^\top(\bI-\bP_{\bX_{\cS}})\bX_{j_0}$, for all $\cS\in{\cA_1}(j_0)$. Recall that we have defined $\bgamma_\cS=\frac{1}{\sqrt{n}}(\bI-\bP_{\bX_\cS})\bX_{\cS_0}\bbeta_{\cS_0}^*$, and that  $\ltwonorm{\bgamma_{\cS}}^2=(\bbeta_{\cS_0}^*)^\top\widehat\bD(\cS)\bbeta_{\cS_0}^*$, where $\cS_0=\cS^*\setminus\cS$. By definition,
$$
%\max_{\cS\in\cC_{j_0}}\ltwonorm{\bgamma_\cS}\leq \sigma\biggl(\frac{c\log p}{n}\biggr) ^{1 / 2}.
\sup_{\cS\in\cC_{j_0}}\ltwonorm{\bgamma_\cS} ^ 2\leq \tau^*.
$$
On the other hand, a union bound yields that for any $\xi>2^{1 / 2}$,
$$
\P\biggl\{\sup_{\cS\in\cC_{j_0}}\frac{|\bgamma_\cS^\top\bepsilon|}{\ltwonorm{\bgamma_\cS}}\geq \xi\sigma(\log p) ^ {1 / 2} \biggr \}\leq 2e^{-(\xi^2/2-1)\log p}.
$$
Let $\xi=2$. Then the two inequalities above yield that
\begin{equation}\label{eq:lower_bound_linear_final}
\P\biggl\{\sup_{\cS\in\cC_{j_0}}\left|2n^{-1}\beta^*_{j_0}\bepsilon^\top(\bI-\bP_{\bX_{\cS}})\bX_{j_0}\right|\geq 2\sigma \biggl(\frac{\tau^* \log p}{n} \biggr)^{1 / 2} \biggr\}\leq \frac{2}{p}.
\end{equation}

Finally, combining (\ref{eq:lower_bound_loss_comparison}), (\ref{eq:lower_bound_quadratic_final}) and (\ref{eq:lower_bound_linear_final}), we obtain that with probability at least $1-2p^{-1} - C_2(\delta_0)(\log p) ^ {-1}$,
\begin{align}
\inf_{\cS\in\cC_{j_0}}n^{-1}(R_{\cS} - R_{\cS^*})
& = \inf_{\cS\in\cC_{j_0}} \bigl\{\widehat D(\cS){\beta^{*}_{j_0}} ^ 2 + 2n^{-1}\beta^*_{j_0} \bepsilon^\top(\bI-\bP_{\bX_{\cS}})\bX_{j_0}  - n^{-1} \bepsilon^\top(\bP_{\bX_{\cS}}-\bP_{\bX_{\cS^*}})\bepsilon \bigr\}\nonumber\\
& \le  \tau^* + \sup_{\cS\in\cA_1}\biggl|2n^{-1}\beta^*_{j_0} \bepsilon^\top(\bI-\bP_{\bX_{\cS}})\bX_{j_0} \biggr| - \sup_{\cS \in \cA_1} \frac{1}{n} \bepsilon^\top (\bP_{\bX_{\cS}} - \bP_{\bX_{\cS^*}}) \bepsilon \nonumber\\
& \le \tau ^ * + 2\sigma \biggl(\frac{\tau^* \log p}{n} \biggr)^{1 / 2} - \frac{\sigma^2 \delta_0^2 c_{\delta_0} \log p}{21n}. \nonumber
\end{align}
The conclusion thus follows by our condition on $\tau ^ *$.

\begin{lem}\label{lem:normalize}
	Suppose $\bu_1, \bu_2\in \R^d$ such that $0 < \ltwonorm{\bu_1},\ltwonorm{\bu_2}\leq 1$. Define $\bar \bu_i = \bu_i / \ltwonorm{\bu_i}$ for $i = 1, 2$. Then
	$$
	\ltwonorm{\bu_1-\bu_2}\geq \min\{\ltwonorm{\bu_1}, \ltwonorm{\bu_2}\} \ltwonorm{\bar\bu_1-\bar\bu_2}.
	$$
\end{lem}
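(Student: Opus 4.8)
The plan is to reduce the statement to an elementary one-variable inequality. Without loss of generality assume $\ltwonorm{\bu_1}\le\ltwonorm{\bu_2}$ and write $r_i:=\ltwonorm{\bu_i}$, so that $0<r_1\le r_2$ and the multiplier on the right-hand side is $\min\{\ltwonorm{\bu_1},\ltwonorm{\bu_2}\}=r_1$. Put $c:=\langle\bar\bu_1,\bar\bu_2\rangle$, which lies in $[-1,1]$ by Cauchy--Schwarz. Since $\bu_i=r_i\bar\bu_i$ and $\ltwonorm{\bar\bu_i}=1$, expanding the squared norms gives the two identities $\ltwonorm{\bu_1-\bu_2}^2=r_1^2+r_2^2-2r_1r_2c$ and $\ltwonorm{\bar\bu_1-\bar\bu_2}^2=2-2c$.

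Both sides of the claimed inequality are nonnegative, so it suffices to prove it after squaring. Substituting the two identities, the target $\ltwonorm{\bu_1-\bu_2}^2\ge r_1^2\ltwonorm{\bar\bu_1-\bar\bu_2}^2$ simplifies, after cancelling $r_1^2$, to
\[
r_2^2-r_1^2\ \ge\ 2r_1c\,(r_2-r_1),
\qquad\text{i.e.}\qquad
(r_2-r_1)(r_2+r_1)\ \ge\ 2r_1c\,(r_2-r_1).
\]
If $r_1=r_2$ both sides vanish; otherwise $r_2-r_1>0$, and dividing by it reduces the task to $r_1+r_2\ge 2r_1c$. This holds because $c\le 1$ and $r_2\ge r_1$, whence $2r_1c\le 2r_1\le r_1+r_2$. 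This completes the argument.

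There is no genuine obstacle here: once one introduces $r_1,r_2$ and $c:=\langle\bar\bu_1,\bar\bu_2\rangle$, the inequality becomes a two-line computation. The only matters needing care are the initial reordering of $\bu_1,\bu_2$ so that $r_1$ is the smaller norm, and observing that the degenerate case $\bar\bu_1=\bar\bu_2$ (equivalently $c=1$) is handled by exactly the same chain of inequalities. We remark in passing that the hypothesis $\ltwonorm{\bu_i}\le 1$ is not actually used in this proof; only $\bu_i\neq\bzero$ is needed so that $\bar\bu_i$ is well defined.
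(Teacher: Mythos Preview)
Your proof is correct. The paper argues geometrically: after the same WLOG reduction it introduces the rescaled point $\bu_2':=(r_1/r_2)\bu_2$, observes that $\ltwonorm{\bu_1-\bu_2'}=r_1\ltwonorm{\bar\bu_1-\bar\bu_2}$, and then notes that in the triangle with vertices $\bu_1,\bu_2',\bu_2$ the angle at $\bu_2'$ is at least $\pi/2$ (being supplementary to a base angle of the isoceles triangle $O\bu_1\bu_2'$), so the side $\bu_1\bu_2$ is the longest. Your route is a direct algebraic expansion via $c=\langle\bar\bu_1,\bar\bu_2\rangle$ that reduces everything to $r_1+r_2\ge 2r_1c$; this is essentially the law-of-cosines computation underlying the paper's picture, but it avoids the need to justify the obtuse-angle claim and makes the logic fully explicit. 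Your closing remark that the hypothesis $\ltwonorm{\bu_i}\le 1$ is never used is also correct and worth noting.
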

\begin{proof}
	Consider a Euclidean space where $\bu_1=\vv{OA}$, $\bu_2=\vv{OB}$, $\bar\bu_1=\vv{O\bar{A}}$, $\bar\bu_2=\vv{O\bar{B}}$. Without loss of generality, assume that $\ltwonorm{\bu_1}\leq \ltwonorm{\bu_2}$. Let $\bu_2' = \frac{\ltwonorm{\bu_1}}{\ltwonorm{\bu_2}}\bu_2=\vv{OB_1}$. Then $|AB_1|=\ltwonorm{\bu_1}  \ltwonorm{\bar\bu_1-\bar\bu_2}$, and $|AB|=\ltwonorm{\bu_1-\bu_2}$. On the other hand, $\ltwonorm{\bu_2}=\ltwonorm{\bu_1}$, meaning that $|OA|=|OB_1|$. Thus $ABB_1$ is an obtuse triangle, and we have $|AB_1|\leq|AB|$.
\end{proof}
\begin{lem}\label{lem:var_of_max}
	Given two random variables $X_1$ and $X_2$ valued in $\RR$, $\Var\{\max(X_1, X_2)\} \leq \Var(X_1)+\Var(X_2)$.
\end{lem}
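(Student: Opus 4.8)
The plan is to combine the standard symmetrization identity for the variance with the elementary fact that $(a,b)\mapsto\max(a,b)$ is $1$-Lipschitz with respect to the $\ell^\infty$ norm on $\RR^2$. First I would dispose of the degenerate case: if either $\Var(X_1)$ or $\Var(X_2)$ is infinite the right-hand side is $+\infty$ and there is nothing to prove, so we may assume both are finite; then $\max(X_1,X_2)$ also has finite second moment because $|\max(X_1,X_2)|\le|X_1|+|X_2|$.

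Next, let $(X_1',X_2')$ be an independent copy of the pair $(X_1,X_2)$, and set $Y:=\max(X_1,X_2)$ and $Y':=\max(X_1',X_2')$; since $Y'$ is a measurable function of $(X_1',X_2')$, it is an independent copy of $Y$. The symmetrization identity gives $\Var(Y)=\tfrac12\,\E\big[(Y-Y')^2\big]$ and, likewise, $\Var(X_i)=\tfrac12\,\E\big[(X_i-X_i')^2\big]$ for $i=1,2$. I would then invoke the pointwise bound $|\max(a_1,a_2)-\max(b_1,b_2)|\le\max(|a_1-b_1|,|a_2-b_2|)$, valid for all reals, with $(a_1,a_2)=(X_1,X_2)$ and $(b_1,b_2)=(X_1',X_2')$. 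Squaring yields, almost surely, $(Y-Y')^2\le\max(|X_1-X_1'|^2,|X_2-X_2'|^2)\le|X_1-X_1'|^2+|X_2-X_2'|^2$. Taking expectations and using the three identities above gives $2\Var(Y)\le\E[(X_1-X_1')^2]+\E[(X_2-X_2')^2]=2\Var(X_1)+2\Var(X_2)$, which is exactly the claim.

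There is essentially no obstacle: the only items to verify are the trivial reduction to the finite-variance case and the Lipschitz inequality for $\max$, and the latter follows from monotonicity of $\max$ together with $a_i\le b_i+|a_i-b_i|$, which gives $\max(a_1,a_2)\le\max(b_1,b_2)+\max(|a_1-b_1|,|a_2-b_2|)$, and then symmetrically with the roles of $a$ and $b$ exchanged. If anything needs care, it is merely making the symmetrization identity $\Var(Z)=\tfrac12\E[(Z-Z')^2]$ explicit once (it follows from expanding $\E[(Z-Z')^2]=2\E[Z^2]-2(\E Z)^2$ using independence), but this is routine.
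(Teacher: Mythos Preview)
Your argument is correct. The symmetrization identity $\Var(Z)=\tfrac12\E[(Z-Z')^2]$ is valid whenever $Z$ is square-integrable, the $\ell^\infty$-Lipschitz bound for $\max$ is established exactly as you say, and the chain of inequalities goes through without issue.

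The paper, however, argues differently and more briefly: it uses the identity $\max(X_1,X_2)=\tfrac12(X_1+X_2)+\tfrac12|X_1-X_2|$, then the crude bound $\Var(A+B)\le 2\Var(A)+2\Var(B)$ together with $\Var(|Z|)\le\Var(Z)$, and finally the parallelogram identity $\Var(X_1+X_2)+\Var(X_1-X_2)=2\Var(X_1)+2\Var(X_2)$. So the key decomposition is algebraic rather than probabilistic, and no independent copy is introduced. Your route has the advantage of generalizing immediately to $\Var\{\max_{i\le k}X_i\}\le\sum_{i\le k}\Var(X_i)$, since $\max$ is $1$-Lipschitz in the $\ell^\infty$ norm on $\RR^k$; the paper's identity is specific to two variables (and would require induction plus a worse constant to extend). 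On the other hand, the paper's proof is a one-liner and avoids constructing the auxiliary copy.
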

\begin{proof}
	$\Var\{\max(X_1, X_2)\} =\Var\{(X_1+X_2)/2+|X_1-X_2|/2 \}\leq \frac{1}{2}\Var(X_1+X_2)+\frac{1}{2}\Var(X_1-X_2)=\Var(X_1)+\Var(X_2)$.
\end{proof}
\begin{lem}[{\citet[][Lemma~6.12]{van16}}]
	\label{lem:concentration_sup_of_gaussian}
	Let $\{X_t\}_{t\in T}$ be a separable Gaussian process. Then $\sup_{t\in T}X_t$ is $\sup_{t\in T}\Var(X_t)$-subgaussian.
\end{lem}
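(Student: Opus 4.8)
The plan is to derive this from the Gaussian concentration inequality for Lipschitz functions, after reducing to a finite index set. Write $\sigma^2 := \sup_{t \in T} \Var(X_t)$ and $Z := \sup_{t \in T} X_t$; the goal is to show that $Z - \E Z$ satisfies the subgaussian moment generating function bound $\E e^{\lambda(Z - \E Z)} \le e^{\lambda^2 \sigma^2 / 2}$ for all $\lambda \in \R$, which in turn yields the $\sqrt{q}$ moment growth invoked in the main text.

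First I would pass to a finite index. By separability there is a countable $T_0 \subset T$ with $\sup_{t \in T} X_t = \sup_{t \in T_0} X_t$ almost surely, so $Z$ is measurable; I take the supremum to be almost surely finite (equivalently $\E Z < \infty$), which is the standing hypothesis of the statement and is automatic in the paper's application, where the index ranges over the finitely many $j \notin \cS^*$. Enumerating $T_0 = \{t_1, t_2, \ldots\}$ and setting $Z_m := \max_{1 \le j \le m} X_{t_j}$, we have $Z_m \uparrow Z$, each $\Var(X_{t_j}) \le \sigma^2$, and $\E Z_m \to \E Z$. Once the subgaussian bound is established for every $Z_m$ with the \emph{uniform} proxy $\sigma^2$, it transfers to $Z$ by monotone/dominated convergence applied to $e^{\lambda Z_m}$. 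Thus it suffices to treat a finite collection $\{t_1, \ldots, t_m\}$.

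For the finite case I would represent the centered Gaussian vector $\bX := (X_{t_1}, \ldots, X_{t_m})^\top$, with covariance $\Sigma$, as $\bX = A\bg$ where $A := \Sigma^{1/2}$ and $\bg \sim \cN(\bzero, \bI_m)$. Writing $\ba_j^\top$ for the $j$th row of $A$, we get $X_{t_j} = \ba_j^\top \bg$ and $\|\ba_j\|_2^2 = \Sigma_{jj} = \Var(X_{t_j}) \le \sigma^2$. Define $f(\bg) := \max_{1 \le j \le m} \ba_j^\top \bg$. Then for any $\bg, \bg' \in \R^m$,
\[
|f(\bg) - f(\bg')| \le \max_{j} |\ba_j^\top(\bg - \bg')| \le \Bigl(\max_{j} \|\ba_j\|_2\Bigr)\|\bg - \bg'\|_2 \le \sigma \|\bg - \bg'\|_2,
\]
so $f$ is $\sigma$-Lipschitz on $\R^m$ and $Z_m = f(\bg)$.

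The final and essential step is the Gaussian concentration inequality: every $L$-Lipschitz $f : \R^m \to \R$ evaluated at $\bg \sim \cN(\bzero, \bI_m)$ obeys $\E e^{\lambda(f(\bg) - \E f(\bg))} \le e^{\lambda^2 L^2 / 2}$. With $L = \sigma$ this gives exactly the desired bound for each $Z_m$, uniformly in $m$, and the limiting argument of the second paragraph completes the proof. The main obstacle is this concentration inequality, which carries all the analytic content; I would obtain it from the Gaussian logarithmic Sobolev inequality together with Herbst's argument — setting $H(\lambda) := \log \E e^{\lambda f}$ and using log-Sobolev with $\|\nabla f\|_2 \le L$ (valid almost everywhere, after mollifying $f$ to make it smooth) to derive the differential inequality $\lambda H'(\lambda) - H(\lambda) \le \tfrac{1}{2}\lambda^2 L^2$, which integrates to the stated bound. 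The separability reduction and the Lipschitz estimate are routine by comparison.
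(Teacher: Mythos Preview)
Your argument is correct and is the standard route to this result. Note, however, that the paper does not supply its own proof of this lemma: it is stated with a citation to \citet[][Lemma~6.12]{van16} and used as a black box. The proof you outline---reduction to a finite index set by separability, representation as $f(\bg)=\max_j \ba_j^\top\bg$ with $\|\ba_j\|_2^2=\Var(X_{t_j})\le\sigma^2$, the Lipschitz bound $|f(\bg)-f(\bg')|\le\sigma\|\bg-\bg'\|_2$, and then Gaussian concentration via log-Sobolev plus Herbst---is precisely the argument in the cited reference, so there is nothing further to compare.
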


\subsection{Proof of Theorem \ref{thm:power}}
For $t\in \{1, \ldots, s\}$, let $\hat\cA_t :=\{\cS\subset [p]: |\cS|=\hat s, |\cS^*\setminus\cS|=t\}$ (i.e., the set of the sets that have missed $t$ different elements in $\cS^*$). Then we have $\cA(\hat s )=\cup_{t\in[s]}\widehat\cA_t$. Our goal is to prove that with high probability, $R_{\cS^*}\leq R_{\cS} - n\eta\tau_*$ for all $\cS\in\cup_{t \ge  \delta s} \widehat\cA_t$ under condition \eqref{eq:min_tau_condition_1}, so that $\tpr(\cS) \ge 1 - \delta$ for any $\cS$ satisfiying that $|\cS| = \hat s$ and that $R_{\cS} \le R_{\cS  ^ *} + n\eta \tau_*$.

Now we fix $t\in[s]$. For any $\cS\in\widehat\cA_t$, define $\cS_0:=\cS^*\setminus\cS$.  Note that
\begin{equation}
\begin{aligned}\label{eq:loss_comparison}
n^{-1}(R_{\cS} - R_{\cS^*}) & = n^{-1} \bigl\{\by^\top(\bI-\bP_{\bX_{\cS}})\by - \by^\top(\bI-\bP_{\bX_{\cS^*}})\by\bigr\}  \\
& = n^{-1}\bigl\{(\bX_{{\cS}_0}\bbeta^*_{{\cS}_0}+\bepsilon)^\top(\bI-\bP_{\bX_{\cS}})(\bX_{{\cS}_0}\bbeta^*_{{\cS}_0}+\bepsilon) - \bepsilon^\top(\bI-\bP_{\bX_{\cS^*}})\bepsilon\bigr\} \\
& = \bbeta^{*\top}_{{\cS}_0}\hat\bD(\cS)\bbeta^{*}_{{\cS}_0} + 2n^{-1}\bepsilon^\top(\bI-\bP_{\bX_{\cS}})\bX_{{\cS}_0}\bbeta^*_{{\cS}_0} - n^{-1} \bepsilon^\top(\bP_{\bX_{\cS}}-\bP_{\bX_{\cS^*}})\bepsilon \\
& \ge \eta\tau_*(\hat s, \delta) + 2^{-1}(1 - \eta)\bbeta^{*\top}_{\cS_0}\hat\bD(\cS)\bbeta^{*}_{{\cS}_0} + 2n^{-1}\bepsilon^\top(\bI-\bP_{\bX_{\cS}})\bX_{{\cS}_0}\bbeta^*_{{\cS}_0} \\
& \qquad + 2^{-1}(1 - \eta)\bbeta^{*\top}_{{\cS}_0}\hat\bD(\cS)\bbeta^{*}_{{\cS}_0} -  n^{-1} \bepsilon^\top(\bP_{\bX_{\cS}}-\bP_{\bX_{\cS^*}})\bepsilon.
\end{aligned}
\end{equation}
In the sequel, we show that the following two inequalities hold with high probability for $ t\in[\delta s, s]$:
\begin{align}
& \left|2n^{-1}\bigl\{(\bI-\bP_{\bX_{\cS}})\bX_{{\cS}_0}\bbeta^*_{{\cS}_0}\bigr\}^\top\bepsilon\right|<2^{-1}(1 -  \eta)\bbeta^{*\top}_{{\cS}_0}\hat\bD(\cS)\bbeta^{*}_{{\cS}_0}, \label{eq:thm3.1_t1}\\
& n^{-1}\bepsilon^\top(\bP_{\bX_{\cS}}-\bP_{\bX_{\cS^*}})\bepsilon<2^{-1}(1 - \eta )\bbeta^{*\top}_{{\cS}_0}\hat\bD(\cS)\bbeta^{*}_{{\cS}_0}.  \label{eq:thm3.1_t2}
\end{align}

First, define
\[
\bgamma_{\cS}:=n^{- 1 / 2} (\bI-\bP_{\bX_{\cS}})\bX_{{\cS}_0}\bbeta^*_{{\cS}_0}.
\]
Then $\lVert{\bgamma_{\cS}}\rVert_2^2 = \bbeta^{*\top}_{{\cS}_0}\hat\bD(\cS)\bbeta^{*}_{{\cS}_0}$, and \eqref{eq:thm3.1_t1} is equivalent to
\begin{equation}
\lvert \bgamma_{\cS}^\top\bepsilon\rvert/ \lVert \bgamma_{\cS}\rVert_2 \le \frac{(1 - \eta)n^{1 / 2}}{4}\lVert \bgamma_{\cS}\rVert_2.
\end{equation}
Given that all the entries of $\bepsilon$ are i.i.d. sub-Gaussian with $\psi_2$-norm bounded by $\sigma$, applying Hoeffding's inequality yields that for any $x > 0$,
$$
\P(\lvert \bgamma_{\cS}^\top\bepsilon\rvert/ \lVert \bgamma_{\cS}\rVert_2>\sigma x)\leq 2e^{-x^2/2}.
$$
Define $\hat M_t:=\sup_{\cS\in\widehat\cA_t} \lvert \bgamma_{\cS}^\top\bepsilon\rvert/ \lVert \bgamma_{\cS}\rVert_2.$
Then a union bound over all $\cS\in \widehat\cA_t$ yields that for any $\xi > 0$,
\begin{align*}
\P(\hat M_t> \xi \sigma & \sqrt{(\hat s-s+t)\log p}) \leq 2\lvert\widehat\cA_t\rvert e^{-\{\xi ^2 (\hat s-s+t)\log p\}/ 2}\\
&=\begin{pmatrix}
p-s\\ \hat s-s+ t
\end{pmatrix}
\begin{pmatrix}
s\\t
\end{pmatrix}
\cdot 2e^{-\xi ^ 2(\hat s - s +t)\log p / 2}
\leq 2e^{-(\xi ^2 / 2 - 2)(\hat s-s+t)\log p}.
\end{align*}
Now under condition \eqref{eq:min_tau_condition_1},
%when $t\geq \delta s$, for all $\cS\in\widehat{A}_t$,
%\begin{equation}\label{eq:tau_bound_1}
%\bbeta^{*\top}_{{\cS}_0}\hat\bD(\cS)\bbeta^{*}_{{\cS}_0}\geq \tau_* t \geq  \frac{t}{\delta} \biggl(\frac{4\xi }{1 - \eta}\biggr)^2\cdot\frac{\hat s}{s}\cdot\frac{\sigma^2\log p}{n}\geq \biggl(\frac{4\xi }{1 - \eta}\biggr)^2\hat s\frac{\sigma^2\log p}{n}\geq \biggl(\frac{4\xi }{1 - \eta}\biggr)^2(\hat s-s+t)\frac{\sigma^2\log p}{n}.
%\end{equation}
we have that for any $t \in (\delta s, s]$ and any $\cS\in\widehat{A}_t$,
\begin{equation}\label{eq:tau_bound_2}
\bbeta^{*\top}_{{\cS}_0}\hat\bD(\cS)\bbeta^{*}_{{\cS}_0}\geq  \biggl(\frac{4\xi }{1 - \eta}\biggr)^2(\hat s-s+t)\frac{\sigma^2\log p}{n}.
\end{equation}
Combining the fact that $ \ltwonorm{\bgamma_{\cS}} =\sqrt{\bbeta^{*\top}_{{\cS}_0}\hat\bD(\cS)\bbeta^{*}_{{\cS}_0}}$, we obtain that
\[
\P\biggl(\widehat M_t > \frac{(1 - \eta)n^{1 / 2}}{4} \inf_{\cS \in \widehat\cA_t}\ltwonorm{\bgamma_{\cS}}\biggr) \le 2 e^{- (\xi ^ 2 / 2 - 2)(\hat s-s+t)\log p},
\]
holds under conditions in $(i)$ and $(ii)$ for different ranges of $t$. This implies that
\beq
\label{eq:t1_bound}
\P\biggl( \exists \cS \in \widehat\cA_t, \frac{\lvert \bgamma_{\cS}^\top\bepsilon\rvert}{\lVert \bgamma_{\cS}\rVert_2} > \frac{(1 - \eta)n^{1 / 2}}{4}\lVert \bgamma_{\cS}\rVert_2\biggr) \le 2e^{-(\xi^2 / 2 - 2)(\hat s-s+t)\log p}.
\eeq

As for \eqref{eq:thm3.1_t2}, define
\[
\hat\delta_t:=\max_{\cS\in\widehat\cA_t}\frac{1}{n}\bepsilon^\top(\bP_{\bX_{\cS}}-\bP_{\bX_{\cS^*}})\bepsilon.
\]
Fix any $\cS\in\widehat\cA_t$, let $\cU$, $\cV$ be the orthogonal complement of $\cW:=\operatorname{colspan}(\bX_{\cS^*\cap\cS})$ as a subspace of $\operatorname{colspan}(\bX_{\cS})$ and $\operatorname{colspan}(\bX_{\cS^*})$ respectively. Then $\dim(\cU)\le \hat s-s+t$, $\dim(\cV)\le t$, and
\begin{align}\label{eq3}
\frac{1}{n}\bepsilon^\top(\bP_{\bX_{\cS}}-\bP_{\bX_{\cS^*}})\bepsilon
&=\frac{1}{n}\bepsilon^\top(\bP_{\cW}+\bP_{\cU})\bepsilon-\frac{1}{n}\bepsilon^\top(\bP_{\cW}+\bP_{\cV})\bepsilon\nonumber\\
&=\frac{1}{n}\bepsilon^\top(\bP_{\cU}-\bP_{\cV})\bepsilon.
\end{align}
%Note that
%\[
%	\begin{aligned}
%	\sup_{\mathrm{dim}(\cU) = t} \lvert \bepsilon^\top \bP_{\cU} \bepsilon  - \E (\bepsilon^\top \bP_{\cU} \bepsilon)\rvert & \le \sup_{\mathrm{dim}(\cU) = t} \biggl\lvert \sum_{j = 1}^t \bigl\{(\bepsilon^\top \bu_j)^2 - \sigma ^ 2\bigr\}\biggr \rvert \le \sup_{\mathrm{dim}(\cU) = t}  \sum_{j = 1}^t \bigl\lvert(\bepsilon^\top \bu_j)^2 - \sigma ^ 2\bigr\rvert \\
%	& \le t(\|\bepsilon\|_2^2)
%	\end{aligned}
%\]
By \cite[][Theorem~1.1]{RV13}, there exists a universal constant $c>0$ such that for any $x>0$,
$$
\P(\lvert\bepsilon^\top\bP_{\cU}\bepsilon-\E\bepsilon^\top\bP_{\cU}\bepsilon\rvert>\sigma^2x)\leq 2e^{-c\min({x^2}/{\fnorm{\bP_{\cU}}^2},\hspace{.03cm} {x}/{\ltwonorm{\bP_\cU}})}\le 2e^{-c\min(x^2/(\hat s-s+t),\hspace{.03cm}x)}.
$$
Similarly,
$$
\P(\lvert\bepsilon^\top\bP_{\cV}\bepsilon-\E\bepsilon^\top\bP_{\cV}\bepsilon\rvert>\sigma^2x)\leq 2e^{-c\min(x^2/t, x)}.
$$
Noticing that $\E(\bepsilon^\top\bP_{\cU}\bepsilon) = \E\tr(\bP_{\cU}\bepsilon\bepsilon^\top)=\Var(\epsilon_1)\tr(\bP_{\cU})=(\hat s-s+t)\sigma^2$, and similarly $\E(\bepsilon^\top\bP_{\cU}\bepsilon) = t\sigma^2$,
we combine the above two inequalities and obtain
\begin{align*}
\P(\lvert\bepsilon^\top\bP_{\cU}\bepsilon-\bepsilon^\top\bP_{\cV}\bepsilon\rvert>(\hat s -s)\sigma^2+2x\sigma^2)&\leq \P(\lvert\bepsilon^\top\bP_{\cU}\bepsilon-\E \bepsilon^\top\bP_{\cU}\bepsilon\rvert>x\sigma^2) \\
&+\P(\lvert\bepsilon^\top\bP_{\cV}\bepsilon-\E \bepsilon^\top\bP_{\cV}\bepsilon\rvert>x\sigma^2)\leq 4e^{-c\min(x^2/(\hat s-s+t),\hspace{.03cm}x)}.
\end{align*}
%Now let $x=C_5t\log p$ with some suitably large constant $C_5$.
Given that $\log p > 1$ and that \eqref{eq3} holds,  applying a union bound over  $\cS\in\widehat\cA_t$ yields that for any $\xi > 1$, by taking $x = \xi(\hat s-s+t)$,
\begin{align}\label{eq4}
& \P\left(\hat\delta_t >\frac{3\xi \sigma^2(\hat s-s+t)\log p}{n}\right) \\
&\leq
\P(\lvert\bepsilon^\top\bP_{\cU}\bepsilon-\bepsilon^\top\bP_{\cV}\bepsilon\rvert>(\hat s -s)\sigma^2+2\sigma^2\cdot \xi(\hat s-s+t))\nonumber\\
&\leq 4\lvert\widehat\cA_t\rvert e^{-c \xi (\hat s-s+t)\log p}=
\begin{pmatrix}
p-s\\ \hat s-s+t
\end{pmatrix}
\begin{pmatrix}
s\\t
\end{pmatrix}
\cdot 4e^{- c\xi (\hat s-s+t)\log p}\nonumber\\
&\leq 4e^{-(c\xi - 2) (\hat s-s+t)\log p}.
\end{align}
Given (\ref{eq:tau_bound_2}), \eqref{eq:min_tau_condition_1} and that $\ltwonorm{\bgamma_{\cS}} =\sqrt{\bbeta^{*\top}_{{\cS}_0}\hat\bD(\cS)\bbeta^{*}_{{\cS}_0}}$, we have that
\[
\P\biggl(\widehat \delta_t > \min_{\cS \in \widehat\cA_t} \frac{1 - \eta}{2} \|\bgamma_{\cS}\|_2^2 \biggr) \le 4e^{- (c\xi - 2)(\hat s-s+t)\log p}
\]
holds for all $t\geq \delta s$.
This further implies that
\begin{equation}
\label{eq:t2_bound}
\P\biggl(\exists \cS \in \widehat\cA_t, \frac{1}{n} \bepsilon^\top(\bP_{\bX_{\cS}} - \bP_{\bX_{\cS^*}})\bepsilon \ge \frac{1 - \eta}{2}\ltwonorm{\bgamma_{\cS}}^2\biggr) \le 4e^{-(c\xi - 2)(\hat s-s+t)\log p}.
\end{equation}

To reach the final conclusion, we combine \eqref{eq:t1_bound} and \eqref{eq:t2_bound}, and apply a union bound with $t\in[\delta s, s]\cup\mathbb N$. We deduce that for any $\xi > \max(1, 2c^{-1})$ and $0< \eta < 1$, if \eqref{eq:min_tau_condition_1} holds,  we have that with probability at least $1 - 4s\bigl\{p^{- (c\xi - 2)} + p^{-(\xi - 2)}\bigr\}$, for any $\cS\in\cup_{t\geq \delta s}\widehat\cA_t$,
\[
R_{\cS} - R_{\cS^*} >  n\eta\tau_*.
\]
%On the other hand, under conditions of $(ii)$, we also combine \eqref{eq:t1_bound} and \eqref{eq:t2_bound}, while applying a union bound with $t\in\{1, \cdots, s\}$. We can see that for any $\xi > \max(1, 2c^{-1})$ and $0< \eta < 1$, if \eqref{eq:min_tau_condition_2} holds,  we have that with probability at least $1 - 4s\bigl\{p^{- (c\xi - 2)} + p^{-(\xi - 2)}\bigr\}$, for any $\cS\in\cup_{t}\widehat\cA_t$,
%\[
%R_{\cS} - R_{\cS^*} >  n\eta\tau_*.
%\]

\subsection{Proof of Theorem \ref{thm:iht}}

By Proposition \ref{prop:iht_opt}, there exists a universal constant $C_1$, such that when $t \ge C_1\kappa \log\{\cL(\widehat \bbeta^{\iht} _ 0) \allowbreak/ (n \eta \tau_*(\pi, \delta))\}$,
\beq
\label{eq:rss_betat}
\begin{aligned}
	R_{\supp(\hat\bbeta^{\iht}_t)} = \cL(\hat\bbeta^{\iht}_t) \le \min_{\bbeta \in \RR^p, \|\bbeta\|_0 \le s} \cL(\bbeta)+ n\eta \tau_*(\pi, \delta) & = \min_{\cS' \subset [p], |\cS'| = s} R_{\cS'} + n\eta \tau_*(\pi, \delta)  \\
	& \le R_{\cS ^ *} + n\eta \tau_*(\pi, \delta).
\end{aligned}
\eeq
Then the conclusion follows immediately by applying Theorem \ref{thm:power} with $\hat s = \pi$.
%\begin{cor}
%\label{cor:power}		
%	\begin{itemize}
%		\item[i)] Under all conditions of Thm. \ref{thm:power} (i), with probability at least $1 - 8sp^{- (C^{-1}\xi - 1)}$, for any $ \cS$ such that $|supp( \cS)|=\widehat s$ and $R_{\cS} \le \min_{\cS' \subset [p], |\cS'| = s} R_{\cS'} + n\eta\tau_*$, we have $TPR(\cS)\geq 1-\delta$.
%		\item[ii)] Under all conditions of Thm. \ref{thm:power} (ii), with probability at least $1 - 8sp^{- (C^{-1}\xi - 1)}$, for any $ \cS$ such that $|supp( \cS)|=\widehat s$ and $R_{\cS} \le \min_{\cS' \subset [p], |\cS'| =s} R_{\cS'} + n\eta\tilde\tau_*$, we have $TPR(\cS)=1$.
%	\end{itemize}
%\end{cor}
%To see why Corollary \ref{cor:power} is true, we only need to notice that the set
%$$
%\{\cS: R_{\cS} \le \min_{\cS' \subset [p], |\cS'| = s} R_{\cS'} + n\eta\tau_*\}
%$$
%is a subset of the set
%$$
%\{\cS: R_{\cS} \le R_{\cS^*} + n\eta\tau_*\}.
%$$
%Therefore, under the same conditions, any $\cS$ satisfying the assumptions of Theorem \ref{thm:power} also satisfies the assumptions of Corollary \ref{cor:power}.

\subsection{Proof of Corollary \ref{cor:two_stage}}

Theorem \ref{thm:iht} shows that there exist universal constants $C_1, C_2$ such that as long as $l\geq s$ and $\pi\geq 4\kappa^2l$, for any $\xi > C_1$ and $0\le  \eta < 1$,   whenever $\tau_* (\pi, \delta) \ge 16\xi ^ 2\sigma ^ 2 \log (p) / \{(1 - \eta) ^ 2n\}$ for some $\delta < s ^ {-1}$, we have with probability at least $1 - 8sp^{- (C_2^{-1}\xi - 1)}$ that $\tpr(\widehat\bbeta^{\iht}_t)=1$, or in other words, $\cS^* \subset \supp(\widehat \bbeta ^ {\iht}_ t)$, for any $t\geq C_1\kappa\log\frac{\cL(\widehat \bbeta^{\iht}_0)}{n\eta\tau_*(\pi, \delta)}$. Combining this with the definition of $\widetilde\bbeta^{\iht}_t(s)$ yields that $R_{\supp(\widetilde \bbeta^{\iht}_t(s))} \le R_{\cS^*}$.

Now we wish to apply Theorem \ref{thm-selection-consistency} to deduce the conclusion. Note that for any $\cS_1, \cS_2 \subset [p]$ such that $\cS_1 \subset \cS_2$, 
\[
	\frac{\bbeta ^{* \top}_{\cS_1 \backslash \cS ^ *}\widehat\bD(\cS_1)\bbeta ^ *_{\cS_1 \backslash \cS ^ *}}{|\cS_1 \backslash \cS ^ *|} = \frac{\ltwonorm{(\bI_n - \bP_{\bX_{\cS_1}})\bX\bbeta ^ *} ^ 2}{|\cS_1 \backslash \cS ^ *|} \ge \frac{\ltwonorm{(\bI_n - \bP_{\bX_{\cS_2}})\bX\bbeta ^ *} ^ 2}{|\cS_2 \backslash \cS ^ *|} = \frac{\bbeta ^{* \top}_{\cS_2 \backslash \cS ^ *}\widehat\bD(\cS_2)\bbeta ^ *_{\cS_2 \backslash \cS ^ *}}{|\cS_2 \backslash \cS ^ *|}. 
\]
Therefore, $\tau_*(s) \ge \tau_*(\pi) \ge \tau_*(\pi, \delta)$. The conclusion then follows by applying Theorem \ref{thm-selection-consistency}. 

{
\subsection{Brief explanation on the selected variables in the Monthly Macroeconomic Dataset}\label{section:macro-var}
In the unemployment rate association study, the variables selected by IHT are: \texttt{HWIURATIO} (Ratio of Help Wanted/No. Unemployed), \texttt{HWI} (Help-Wanted Index for United States), \texttt{DMANEMP} (All Employees: Durable goods) and \texttt{PAYEMS} (All Employees: Total nonfarm). Comparatively, LASSO puts more weight on \texttt{COMPAPFFx} (3-Month Commercial Paper Minus FEDFUNDS), \texttt{M1SL} (M1 Money Stock) and \texttt{UEMPMEAN} (Average Duration of Unemployment), while SCAD also puts more weight on \texttt{DMANEMP} (All Employees: Durable goods). In the Consumer Price Index association study, the related variables selected by IHT are \texttt{DNDGRG3M086SBEA} (Personal Cons. Exp: Nondurable goods) and \texttt{PCEPI} (Personal Cons. Expend.: Chain Index). On the other hand, LASSO and SCAD are also selecting features such as \texttt{FEDFUNDS} (Effective Federal Funds Rate), \texttt{NDMANEMP} (All Employees: Nondurable goods), \texttt{WPSID61} (PPI: Intermediate Materials), \texttt{BUSINVx} (Total Business Inventories), etc. SIS includes variables such as \texttt{WPSID61} (PPI: Intermediate Materials), \texttt{WPSID62} (PPI: Crude Materials) and \texttt{WPSFD49207} (PPI: Finished Goods) which are in the same sector.

}
%\begin{description}

%\item[Supplementary material:] The file ``supplementary.pdf" contains more details and proofs of the results in this paper.

%\end{description}

\end{document}